\newtheorem{theorem}{Theorem}
\newtheorem{definition}[theorem]{Definition}
\newenvironment{proof}[1][Proof]{\begin{trivlist}
\item[\hskip \labelsep {\bfseries #1}]}{\end{trivlist}}
\newcommand{\matc}[2][ccccccccccccccccccc]{\left[
\begin{array}{#1}
#2\\
\end{array}
\right]}
\newcommand{\RN}[1]{%
  \textup{\uppercase\expandafter{\romannumeral#1}}%
}
\newlength{\dhatheight}
\title{Compressed Sensing for Wireless Communications : Useful Tips and Tricks}
\author{\IEEEauthorblockN{Jun Won Choi$^{*}$, Byonghyo Shim$^{\flat}$, Yacong Ding$^{\natural}$, Bhaskar Rao$^{\natural}$, Dong In Kim$^{\dagger}$ } \\
\IEEEauthorblockA{$^{*}$Hanyang University, Seoul, Korea \\ $^{\flat}$Seoul National University, Seoul, Korea \\ $^{\natural}$University of California at San Diego, CA, USA \\ $^{\dagger}$Sungkyunkwan University, Suwon, Korea }

\thanks{This research was funded by the research grant from the National Research Foundation of Korea (NRF) grant funded by the Korea government (MSIP) (No. 2014R1A5A1011478). }
}
\begin{document}

\maketitle

\begin{abstract}
As a paradigm to recover the sparse signal from a small set of linear measurements, compressed sensing (CS) has stimulated a great deal of interest in recent years. In order to apply the CS techniques to wireless communication systems, there are a number of things to know and also several issues to be considered.
However, it is not easy to come up with simple and easy answers to the issues raised while carrying out research on CS. The main purpose of this paper is to provide essential knowledge and useful tips that wireless communication researchers need to know when designing CS-based wireless systems.
First, we present an overview of the CS technique, including basic setup, sparse recovery algorithm, and performance guarantee. Then, we describe three distinct subproblems of CS, viz., {\it sparse estimation}, {\it support identification}, and {\it sparse detection}, with various wireless communication applications. We also address main issues encountered in the design of CS-based wireless communication systems.
%
These include potentials and limitations of CS techniques, useful tips that one should be aware of, subtle points that one should pay attention to, and some prior knowledge to achieve better performance.
Our hope is that this article will be a useful guide for wireless communication researchers and even non-experts to grasp the gist of CS techniques.
\end{abstract}

\begin{IEEEkeywords}
Compressed sensing, sparse signal, underdetermined systems, wireless communication systems, $\ell_1$-norm, greedy algorithm, performance guarantee.
\end{IEEEkeywords}


\section{Introduction}
\label{sec:intro}
Compressed sensing (CS) is an attractive paradigm to acquire, process, and recover the sparse signals \cite{donoho}. This new paradigm is very competitive alternative to conventional information processing operations including sampling, sensing, compression, estimation, and detection.
Traditional way to acquire and reconstruct analog signals from sampled signal is based on the celebrated Nyquist-Shannon's sampling theorem \cite{shannon_94Rauhut} which states that the sampling rate should be at least twice the bandwidth of
an analog signal to restore it from the discrete samples accurately.
In case of a discrete signal regime, the fundamental theorem of linear algebra states that the number of observations in a linear system should be at least equal to the length of the desired signal to ensure the accurate recovery of the desired signal.
While these fundamental principles always hold true, it might be too stringent in a situation where signals of interest are sparse, meaning that the signals can be represented using a relatively small number of nonzero coefficients.

The CS paradigm provides a new perspective on the way we process the information.
While approaches exploiting the sparsity of signal vector have been used for a long time in image processing and transform coding, this topic has attracted wide attention ever since the works of Donoho \cite{donoho} and Candes, Romberg, and Tao \cite{stable}. At the heart of the CS lies the fact that a sparse signal can be recovered from the underdetermined linear system in a computationally efficient way. In other words, a small number of linear measurements (projections) of the signal contain enough information for its reconstruction.
Main wisdom behind the CS is that essential knowledge in the large dimensional signals is just handful, and thus measurements with the size being proportional to the sparsity level of the input signal are enough to reconstruct the original signal. In fact, in many real-world applications, signals of interest are sparse or can be approximated as a sparse vector in a properly chosen basis. Sparsity of underlying signals simplifies the acquisition process, reduces memory requirement and computational complexity, and further enables to solve the problem which has been believed to be unsolvable.

In the last decade, CS techniques have spread rapidly in many disciplines such as medical imaging, machine learning, computer science, statistics, and many others.
Also, various wireless communication applications exploiting the sparsity of a target signal have been proposed in recent years. Notable examples, among many others, include channel estimation, interference cancellation, direction estimation, spectrum sensing, and symbol detection (see Fig. \ref{fig:bigpicture}).
These days, many tutorials, textbooks, and papers are available\cite{eldar_book,cs_magazine,zhu_han}, but it might not be easy to grasp the essentials and useful tips tailored for wireless communication engineers.
One reason is because the seminal works are highly theoretic and based on harmonic analysis, group theory, random matrix theory, and convex optimization so that it is not easy to catch the gist from these works.
Another reason is that the CS-based wireless communication works are mostly case studies so that one cannot grasp a general idea from these studies. Due to these reasons, CS remains somewhat esoteric and vague field for many wireless communication researchers who want to catch the gist of CS and use it in their applications. Notwithstanding the foregoing, much of the fundamental principle and basic knowledge is simple, intuitive, and easy to understand.

\begin{figure}[t]
\centerline{\psfig{file=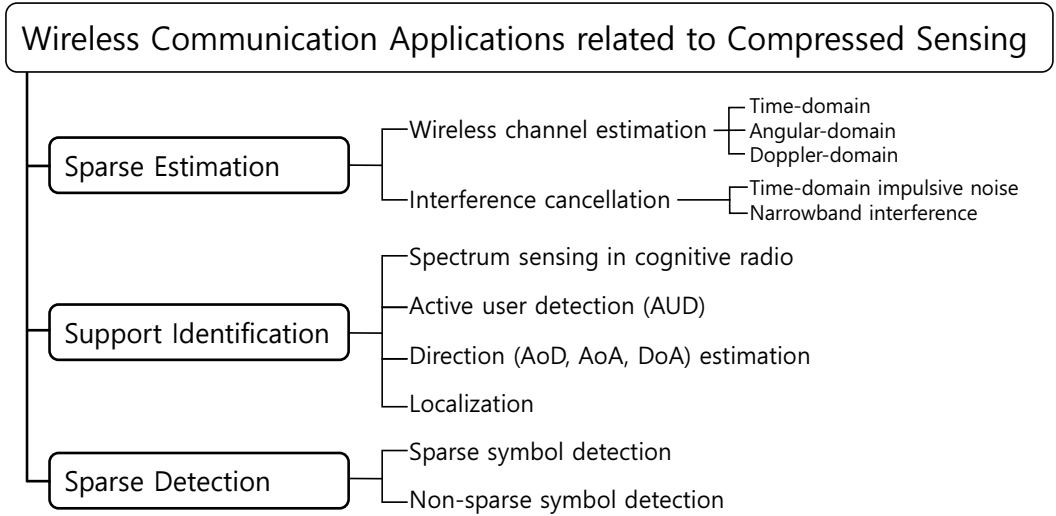,height=70mm,width=140mm}}
\caption{Outline of subproblems of CS related to wireless communications and their examples.}
\label{fig:bigpicture}
\end{figure}
%
The purpose of this paper is neither to describe the complicated mathematical expressions required for the characterization of the CS, nor to describe the details of state-of-the-art CS techniques and sparse recovery algorithms, but to bridge the gap between the wireless communications and CS principle by providing essentials and useful tips that communication engineers and researchers need to be aware of.
With this purpose in mind, we organized this article as follows.
In Section II, we provide an overview of the CS techniques. We review how to solve the systems with linear equations for both overdetermined and underdetermined systems and then address the scenario where the input vector is sparsely represented in the underdetermined setting. We also discuss the sparse signal recovery algorithm and performance guarantee of the CS technique under which accurate or stable recovery of the sparse signal is ensured.
%
%
In Section III, we describe the basic wireless communication system model and then discuss each subproblems of CS related to wireless communications. Depending on the sparse structure of the desired signal vector, CS problems can be divided into three subproblems: {\it sparse estimation}, {\it support identification}, and {\it sparse detection}. We explain the subproblem with the specific wireless communication applications.
Developing successful CS technique for the specific wireless application requires good understanding on key issues.
These include properties of system matrix and input vector, algorithm selection/modification/design, system setup and performance requirements. In Section IV, we go over main issues in a way of answering to seven fundamental questions. In each issue, we provide useful tips, essential knowledge, and benefits and limitations so that readers can catch the gist and thus take advantage of CS techniques.
We conclude the paper in Section V by summarizing the contributions and discussing open issues. Our hope is that this paper would provide better view, insight, and understanding of the potentials and limitations of CS techniques to wireless communication researchers.

%
%
In the course of this writing, we observe a large body of researches on CS, among which we briefly summarize some notable tutorial and survey results here. Short summary of CS is presented by Baraniuk \cite{baraniuk_magazine}. Extended summary can be found in Candes and Wakin \cite{cs_magazine}. Forucart and Rauhut provided a tutorial of CS with an emphasis on mathematical properties for performance guarantee \cite{rauhut} and similar approach can be found in \cite{fornasier}. Comprehensive treatment on various issues, such as sparse recovery algorithms, performance guarantee, and CS applications, can be found in the book of Eldar and Kutyniok \cite{eldar_book}. Book of Han, Li, and Yin summarized the CS techniques for wireless network applications\cite{zhu_han} and Hayashi, Nagahara, and Tanaka discussed the applications of CS to the wireless communication systems \cite{hayashi}.

Our work is distinct from previous studies in the following aspects. First, we divide the wireless communication problems into three distinct subproblems (sparse estimation, sparse detection, support identification) and then explain the details of each problem, which could provide more systematic view and also help researchers easily classify the problem under investigation.
Second, we focus more on physical layer issues with various state-of-the-art examples. While examples of previous studies are unclassified, we introduce plentiful examples including channel estimation and impulsive noise cancellation (Section III.B), spectrum sensing, active user detection, and direction estimation (Section III.C), and non-sparse detection (Section III.D) and then classify each into one of subproblems.
From these examples, researchers and practitioners can catch the similarity and disparity between the  examples and their own problem and therefore figure out how the CS technique can be applied to their wireless applications.
We further present useful tips and tricks in a way of answering to seven important issues raised in the wireless system design for which there is no counterpart in previous studies. These include properties of system matrix and target vector, system and algorithm design when multiple measurements are available, selection of sparse recovery algorithm, and how to deal with the constraints and additional hints (Section IV).


\section{Basics of Compressed Sensing}
\label{sec:basics}

\subsection{Solutions of Linear Systems}
\label{sec:overdetermined}
We begin with a linear system having $m$ equations and $n$ unknowns given by
\begin{eqnarray}
\mathbf{y} = \mathbf{Hs}  \label{eq:system}
\end{eqnarray}
where $\mathbf{y}$ is the measurement vector, $\mathbf{s}$ is the desired signal vector to be reconstructed, and $\mathbf{H} \in \mathcal{R}^{m\times n}$ is the system matrix. In this case, the measurement vector $\mathbf{y}$ can be expressed as a linear combination of the columns of $\mathbf{H}$, that is, $\mathbf{y} = \sum_i s_i \mathbf{h}_i$ ($s_i$ and $\mathbf{h}_i$ are the $i$-th entry of $\mathbf{s}$ and $i$-th column of $\mathbf{H}$, respectively) so that $\mathbf{y}$ lies in the subspace spanned by the columns of $\mathbf{H}$.

We first consider the scenario where the number of measurements is larger than or equal to the size of unknown vector ($m \geq n$).
In this case, often referred to as overdetermined scenario, one can recover the desired vector $\mathbf{s}$ using a simple algorithm (e.g., Gaussian elimination) as long as the system matrix is a full rank (i.e., $rank(\mathbf{H}) = \min\{ m, n\}$).
Even if this is not the case, one can find an approximate solution minimizing the error vector $\mathbf{e} = \mathbf{y} - \mathbf{Hs}$.
The conventional approach to recover $\mathbf{s}$ from the measurement vector $\mathbf{y}$ is to find the vector $\mathbf{s}^*$ minimizing the $\ell_2$-norm of the error vector, i.e.,
\begin{eqnarray}
    \mathbf{s}^* = \arg \min_{\mathbf{s}} \| \mathbf{e} \|_2.
\end{eqnarray}
%
Since $\| \mathbf{e} \|_2^2 = \mathbf{s}^T \mathbf{H}^T \mathbf{H s} - 2 \mathbf{y}^T \mathbf{H s} + \mathbf{y}^T \mathbf{y}$, by setting the derivative of $\| \mathbf{e} \|_2^2$ with respect to $\mathbf{s}$ to zero, we have
$\frac{ \partial }{ \partial \mathbf{s}} \| \mathbf{e} \|_2^2 = 2 \mathbf{H}^T \mathbf{H s} - 2 \mathbf{H}^T \mathbf{y} = 0$,
and
\begin{eqnarray}
\mathbf{s}^* = (\mathbf{H}^T \mathbf{H})^{-1} \mathbf{H}^T \mathbf{ y}. \label{eq:lss}
\end{eqnarray}
The obtained solution $\mathbf{s}^*$ is called least squares (LS) solution and the operator $(\mathbf{H}^T \mathbf{H})^{-1} \mathbf{H}^T$ is called the pseudo inverse and denoted as $\mathbf{H}^{\dagger}$. Note that $\mathbf{H} \mathbf{s}^*$ is closest to the measurement vector $\mathbf{y}$ among all possible points in the range space of $\mathbf{H}$.
%
%

While finding the solution in an overdetermined scenario is straightforward and fairly accurate in general, the task to recover the input vector in an underdetermined scenario where the measurement size is smaller than the size of unknown vector ($m < n$) is challenging and problematic, since one cannot find out the unique solution in general.
As a simple example, consider the example where $\mathbf{H} = [1 ~ 1]$ and the original vector is $\mathbf{s} = [s_1 ~ s_2]^T = [1 ~ 1]^T$ (and hence $y = 2$).
Since the system equation is $2 = s_1 + s_2$, one can easily observe that there are infinitely many possible solutions satisfying this. This is because for any vector $\mathbf{v} = [v_1 ~ v_2 ]^T$ satisfying $0 = v_1 + v_2$ (e.g., $v_1 = -1$ and $v_2 = 1$), $\mathbf{s}' = \mathbf{s} + \mathbf{v}$ also satisfies $\mathbf{y} = \mathbf{H}\mathbf{s}'$.
Indeed, there are infinitely many vectors in the null space $N (\mathbf{H}) = \{ \mathbf{v} ~|~ \mathbf{H v} = 0\}$ for the underdetermined scenario so that one cannot find out the unique solution satisfying \eqref{eq:system}.
In this scenario, because $\mathbf{H}^T \mathbf{H}$ is not full rank and hence non-invertible, one cannot compute the LS solution in \eqref{eq:lss}. Alternative approach is to find a solution minimizing the $\ell_2$-norm of $\mathbf{s}$ while satisfying $\mathbf{y} = \mathbf{H s}$:
\begin{eqnarray}
\mathbf{s}^* = \arg \min \| \mathbf{s} \|_2 ~\mbox{s.t.}~ \mathbf{y} = \mathbf{H s}. \label{eq:L2}
\end{eqnarray}
%
Using the Lagrangian multiplier method, one can obtain\footnote{By setting derivative of the Lagrangian $L(\mathbf{s}, \lambda) = \| \mathbf{s} \|_2^2 + \lambda^T (\mathbf{y} - \mathbf{H s})$ with respective to $\mathbf{s}$ to zero, we obtain $\mathbf{s}^* = -\frac{1}{2} \mathbf{H}^T \lambda$. Using this together with the constraint $\mathbf{y} = \mathbf{H s}$, we get $\lambda = -2 (\mathbf{H} \mathbf{H}^T)^{-1} \mathbf{y}$ and $\mathbf{s}^* = \mathbf{H}^T (\mathbf{H} \mathbf{H}^T)^{-1} \mathbf{y}$.}
\begin{eqnarray}
\mathbf{s}^* = \mathbf{H}^T ( \mathbf{H} \mathbf{H}^T)^{-1} \mathbf{y}. \label{eq:under}
\end{eqnarray}
%
Since the solution $\mathbf{s}^*$ is a vector satisfying the constraint ($\mathbf{y} = \mathbf{H s}$) with the minimum energy, it is often referred to as the minimum norm solution. Since the system has more unknowns than measurements, the minimum norm solution in \eqref{eq:under} cannot guarantee to recover the original input vector.
This is well-known bad news. However, as we will discuss in the next subsection, sparsity of the input vector provides an important clue to recover the original input vector.

\subsection{Solutions of Underdetermined Systems for Sparse Input Vector}
\label{sec:overdetermined}
%
As mentioned, one cannot find out the unique solution of the underdetermined system since there exist infinitely many solutions.
If one wish to narrow down the choice to convert ill-posed problem into well-posed one, additional hint (side information) is needed.
In fact, the CS technique exploits the fact that the desired signal vector is sparse in finding the solution.
A vector is called {\it sparse} if the number of nonzero entries is sufficiently smaller than the dimension of the vector.
As a metric to check the sparsity, we use $\ell_0$-norm $\| \mathbf{s} \|_0$ of a vector $\mathbf{s}$, which is defined as\footnote{One can alternatively define as $\| \mathbf{s} \|_0 = \lim_{p \rightarrow 0} \| \mathbf{s} \|_p^p =  \lim_{p \rightarrow 0} \sum_i | s_i |^p$}
$$\| \mathbf{s} \|_0 = \# \{ i : s_i \neq 0 \}.$$
%
%
%
For example, if $\mathbf{s} = [3 ~ 0 ~ 0 ~ 0 ~ 1 ~ 0]$, then $\| \mathbf{s} \|_0 = 2$. In the simple example we discussed ($2 = s_1 + s_2$), if $\mathbf{s} = [s_1 ~ s_2]$ is sparse, then at least $s_1$ or $s_2$ needs to be zero (i.e., $s_1 = 0$ or $s_2 = 0$). Interestingly, by  invoking the sparsity constraint, the number of possible solutions is dramatically reduced from infinity to two (i.e., $(s_1, s_2) = (2,0)$ or $(0, 2)$).

\begin{figure}[t]
\centerline{\psfig{file=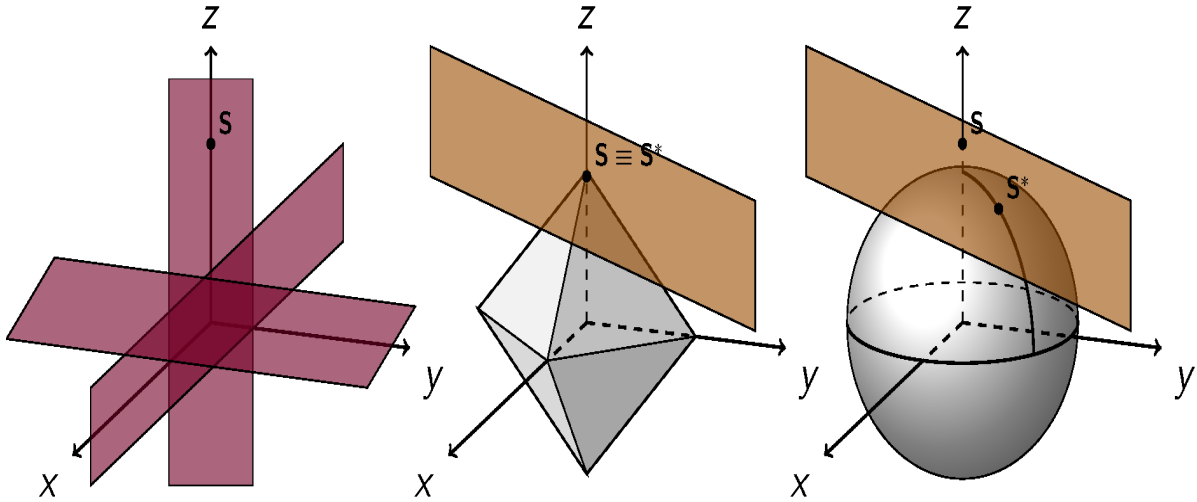,height=70mm,width=160mm}}
\caption{Illustration of $\ell_0$,$\ell_1$, and $\ell_2$-norm minimization approach. If the sparsity of the original vector $\mathbf{s}$ is one, then $\mathbf{s}$ is in the coordinate axes.}
\label{fig:opt_illus}
\end{figure}

Since the $\ell_0$-norm counts the number of nonzero entries in a vector which is a sparsity promoting function, the problem to find the sparest input vector from the measurement vector is readily expressed as
%
\begin{eqnarray}
\mathbf{s}^* = \arg \min \| \mathbf{s} \|_0 ~\mbox{s.t.}~ \mathbf{y} = \mathbf{H s}. \label{eq:L0}
\end{eqnarray}
%
Since the $\ell_0$-norm counts the number of nonzero elements in $\mathbf{s}$, one should rely on the combinatoric search to get the solution in \eqref{eq:L0}. In other words, all possible subsystems $\mathbf{y} = \mathbf{H}_{\Lambda} \mathbf{s}_{\Lambda}$ should be investigated, where $\mathbf{H}_{\Lambda}$ is the submatrix of $\mathbf{H}$ that contains columns indexed by $\Lambda$.\footnote{For example, if $\Lambda = \{ 1, 3\}$, then $\mathbf{H}_{\Lambda} = [\mathbf{h}_1 ~ \mathbf{h}_3]$.}
Initially, we investigate the solution with the sparsity one by checking $\mathbf{y} = \mathbf{h}_{i} s_{i}$ for each $i$. If the solution is found (i.e., a scalar value $s_i$ satisfying $\mathbf{y} = \mathbf{h}_{i} s_{i}$ is found), then the solution $\mathbf{s}^* = [0 ~\cdots ~0 ~ s_i ~0 ~\cdots ~0]$ is returned and the algorithm is finished. Otherwise, we investigate the solution with the sparsity two by checking if the measurement vector is constructed by a linear combination of two columns of $\mathbf{H}$.
This step is repeated until the solution satisfying  $\mathbf{y} = \mathbf{H}_{\Lambda} \mathbf{s}_{\Lambda}$ is found. Since the complexity of this exhaustive search increases exponentially in $n$, $\ell_0$-norm minimization approach is infeasible for most real-world applications.

Alternative approach suggested by Donoho \cite{donoho} and Candes and Tao \cite{cs_magazine} is $\ell_1$-norm minimization approach given by
\begin{eqnarray}
\mathbf{s}^* = \arg \min \| \mathbf{s} \|_1 ~\mbox{s.t.}~ \mathbf{y} = \mathbf{H s}. \label{eq:L1}
\end{eqnarray}
%
%
While the $\ell_1$-norm minimization problem in \eqref{eq:L1} lies in the middle of \eqref{eq:L0} and \eqref{eq:L2}, it can be cast into the convex optimization problem so that the solution of \eqref{eq:L1} can be obtained by the standard linear programming (LP) \cite{donoho}.
In Fig. \ref{fig:opt_illus}, we illustrate $\ell_0$, $\ell_1$ and $\ell_2$-norm minimization techniques.
If the original vector is sparse (say the sparsity is one), then the desired solution can be found by the $\ell_0$-norm minimization since the points being searched are those in the coordinate axes (sparsity one).
%
%
Since $\ell_1$-norm has a diamond shape (it is in general referred to as cross-polytope), one can observe from Fig. \ref{fig:opt_illus}(b) that the solution of this approach corresponds to the vertex, not the face of the cross-polytope in most cases.
Since the vertex of the diamond lies on the coordinate axes, it is highly likely that the $\ell_1$-norm minimization technique returns the desired sparse solution. In fact, it has been shown that under the mild condition the solution of $\ell_1$-norm minimization problem becomes equivalent to the original vector \cite{cs_magazine}.
Whereas, the solution of the $\ell_2$-norm minimization corresponds to the point closest to the origin among all points $\mathbf{s}$ satisfying $\mathbf{y} = \mathbf{Hs}$ so that the solution has no special reason to be placed at the coordinate axes.
Thus, as depicted in Fig. \ref{fig:opt_illus2}, the $\ell_1$-norm minimization solution often equals the original sparse signal while the $\ell_2$-norm minimization solution does not guarantee this.
%
%
\begin{figure}[t]
\centerline{\psfig{file=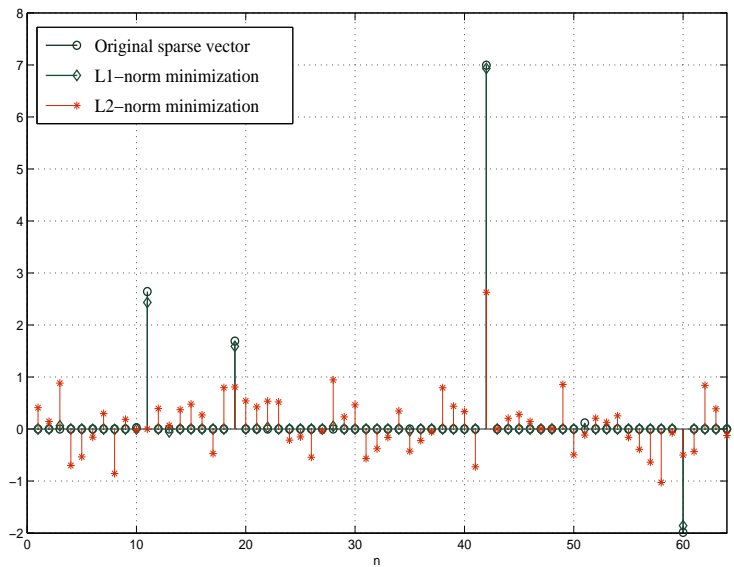,height=85mm,width=120mm}}
\caption{Illustration of performance of $\ell_1$ and $\ell_2$-norm minimization techniques. Entries of the system matrix $\mathbf{H} \in \mathds{R}^{16 \times 64}$ are chosen from standard Gaussian and the sparsity of $\mathbf{s}$ is set to $5$.}
\label{fig:opt_illus2}
\end{figure}

We also note that when the measurement vector $\mathbf{y}$ is corrupted by the noise, one can modify the equality constraint as
\begin{eqnarray}
\mathbf{s}^* = \arg \min \| \mathbf{s} \|_1 ~\mbox{s.t.}~ \| \mathbf{y} - \mathbf{H s} \|_2 \leq \epsilon \label{eq:L1modify}
\end{eqnarray}
where $\epsilon$ is a (pre-determined) noise level of the channel. This type of problem is often called basis pursuit de-noising (BPDN) \cite{bpdn,uncertainty}.
%
This problem has been well-studied subject in convex optimization and there are a number of approaches to solve the problem (e.g., interior-point method \cite{boyd}). Nowadays, there are many optimization packages (e.g., CVX \cite{cvx} or L1-magic \cite{l1magic}) so that one can save the programming effort by using these software tools.

\subsection{Greedy algorithm}
\label{sec:overdetermined}
While the LP technique to solve $\ell_1$-norm minimization problem is effective in reconstructing the sparse vector, it requires computational cost, in particular for large-scale applications. For example, a solver based on the interior point method has an associated computational complexity order of $O(m^2 n^3)$ \cite{donoho}. For many real-time applications including wireless communication applications, therefore, computational cost and time complexity of $\ell_1$-norm minimization solver might be prohibitive.

Over the years, numerous algorithms to recover the sparse signals have been proposed. Notable one among them is a greedy algorithm. By the greedy algorithm, we mean an algorithm to make a local optimal selection at each time with a hope to find the global optimum solution in the end.
Perhaps the most popular greedy algorithm is the orthogonal matching pursuit (OMP) \cite{omp}. In the OMP algorithm, a column of the matrix $\mathbf{H}$ is chosen one at a time using a greedy strategy. Specifically, in each iteration, a column maximally correlated with the (modified) observation is chosen. Obviously, this is not necessarily optimal since the choice does not guarantee to pick the column associated with the nonzero element of $\mathbf{s}$.
Let $\mathbf{h}_{\pi_i}$ be the column chosen in the $i$-th iteration, then the (partial) estimate of $\mathbf{s}$ is  $\hat{\mathbf{s}}_i = \mathbf{H}_i^{\dagger} \mathbf{y}$ and the estimate of $\mathbf{y}$ is $\hat{\mathbf{y}}_i = \mathbf{H}_i \hat{\mathbf{s}}_i = \mathbf{H}_i \mathbf{H}_i^{\dagger} \mathbf{y}$ where $\mathbf{H}_i = [\mathbf{h}_{\pi_1} ~ \mathbf{h}_{\pi_2} ~ \cdots ~ \mathbf{h}_{\pi_i}]$.
%
%
%
By subtracting $\hat{\mathbf{y}}_i$ from $\mathbf{y}$, we obtain the modified observation $\mathbf{r}_i = \mathbf{y} - \hat{\mathbf{y}}_i$, called the residual, used for the next iteration.
By removing the contribution of $\hat{\mathbf{s}}_i$ from the observation vector $\mathbf{y}$ so that we focus on the identification of the rest nonzero elemenets in the next iteration.

One can observe that when the column selection is right, the OMP algorithm can reconstruct the original sparse vector accurately. This is because columns corresponding to the zero element in $\mathbf{s}$ that do not contribute to the observation vector $\mathbf{y}$ can be removed from the system model and as a result the underdetermined system can be converted into overdetermined system (see Fig. \ref{fig:under4}). As mentioned, LS solution for the overdetermined system generates an accurate estimate of the original sparse vector.
%
%
\begin{figure}[t]
\centerline{\psfig{file=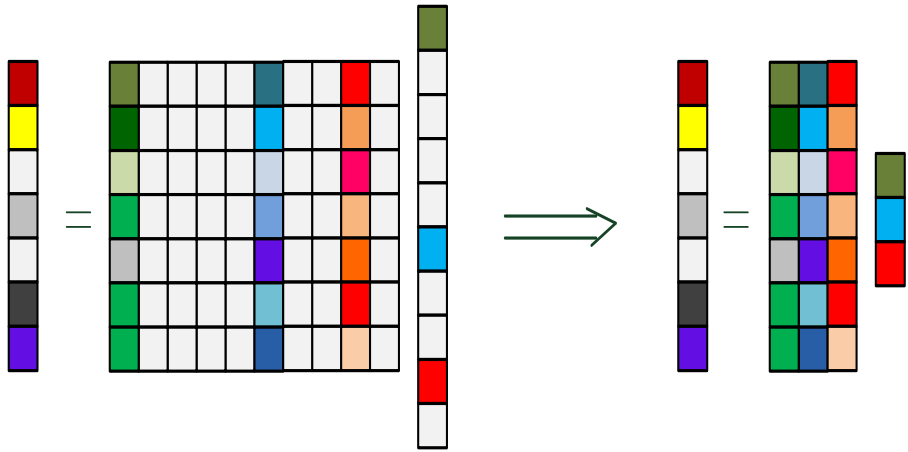,height=60mm,width=120mm}}
\caption{Principle of the greedy algorithm. If the right columns are chosen, then we can convert an underdetermined system into an overdetermined system.}
\label{fig:under4}
\end{figure}
%
Since the computational complexity is typically much smaller than that of the LP techniques to solve \eqref{eq:L1} or \eqref{eq:L1modify}, the greedy algorithm has received much attention in recent years. Interestingly, even for the simple greedy algorithm like OMP algorithm, recent results show that the recovery performance is comparable to the LP technique while requiring much lower computational overhead \cite{jian16}. We will discuss more on the sparse recovery algorithm in Section \ref{sec:recoveryalg}.
%

\subsection{Performance Guarantee}
\label{sec:overdetermined}
In order to analyze the performance guarantee of the sparse recovery algorithm, many analysis tools have been suggested. For the sake of completeness, we briefly go over some of these tools here.
First, a simple yet intuitive property is the spark of the matrix $\mathbf{H}$. Spark of a matrix $\mathbf{H}$ is defined as the smallest number of columns of $\mathbf{H}$ that are linearly dependent.
From this definition, we see that a vector $\mathbf{v}$ in a null space $N (\mathbf{H}) = \{ \mathbf{v} ~|~ \mathbf{H v} = 0\}$ should satisfy $\| \mathbf{v} \|_0 \geq spark ( \mathbf{H} )$ since a vector $\mathbf{v}$ in the null space linearly combines columns in $\mathbf{H}$ to make the zero vector, and at least $spark ( \mathbf{H} )$ columns are needed to do so.
Following results provide the minimum level of spark over which uniqueness of the $k$-sparse solution is ensured.
%
\begin{theorem}[Corollary 1 \cite{cs_magazine}]\label{thm:spark}
%
There is at most one $k$-sparse solution for a system of linear equations $\mathbf{y} = \mathbf{Hs}$ if and only if $spark ( \mathbf{H} ) > 2k$.
\end{theorem}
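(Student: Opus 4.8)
The plan is to prove the two implications separately, both by contraposition, relying on the observation made just before the statement: every nonzero vector $\mathbf{v} \in N(\mathbf{H})$ satisfies $\| \mathbf{v} \|_0 \geq spark(\mathbf{H})$, because such a $\mathbf{v}$ exhibits a linear dependence among the columns of $\mathbf{H}$ indexed by its support.

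First I would handle the ``if'' direction. Assume $spark(\mathbf{H}) > 2k$ and suppose, toward a contradiction, that some $\mathbf{y}$ has two distinct $k$-sparse solutions $\mathbf{s}_1 \neq \mathbf{s}_2$, i.e. $\mathbf{y} = \mathbf{H s}_1 = \mathbf{H s}_2$. Put $\mathbf{v} = \mathbf{s}_1 - \mathbf{s}_2$, so $\mathbf{v} \neq 0$ and $\mathbf{H v} = 0$, hence $\mathbf{v} \in N(\mathbf{H})$. On one hand the support of $\mathbf{v}$ lies in the union of the supports of $\mathbf{s}_1$ and $\mathbf{s}_2$, so $\| \mathbf{v} \|_0 \leq \| \mathbf{s}_1 \|_0 + \| \mathbf{s}_2 \|_0 \leq 2k$; on the other hand the null-space observation gives $\| \mathbf{v} \|_0 \geq spark(\mathbf{H}) > 2k$, a contradiction. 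Therefore no $\mathbf{y}$ can have two distinct $k$-sparse solutions.

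Next I would prove the ``only if'' direction by its contrapositive: assuming $spark(\mathbf{H}) \leq 2k$, I will exhibit a measurement vector with two distinct $k$-sparse preimages. By definition of spark, there is a minimal linearly dependent set $\Lambda$ of $r := spark(\mathbf{H}) \leq 2k$ columns; minimality forces all coefficients in the dependency relation to be nonzero, so there is a vector $\mathbf{v}$ with $\mathbf{H v} = 0$ whose support is exactly $\Lambda$, giving $\| \mathbf{v} \|_0 = r \leq 2k$. I then split $\Lambda$ into disjoint sets $\Lambda_1, \Lambda_2$ with $|\Lambda_1|, |\Lambda_2| \leq k$ (possible since $r \leq 2k$) and define $\mathbf{s}_1$ to equal $\mathbf{v}$ on $\Lambda_1$ and zero elsewhere, and $\mathbf{s}_2$ to equal $-\mathbf{v}$ on $\Lambda_2$ and zero elsewhere. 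Both are $k$-sparse, $\mathbf{s}_1 \neq \mathbf{s}_2$ since $\mathbf{s}_1 - \mathbf{s}_2 = \mathbf{v} \neq 0$, and $\mathbf{H s}_1 - \mathbf{H s}_2 = \mathbf{H v} = 0$, so $\mathbf{y} := \mathbf{H s}_1 = \mathbf{H s}_2$ admits two distinct $k$-sparse solutions, establishing the contrapositive.

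The argument is essentially bookkeeping, so there is no serious obstacle; the only step needing a little care is the splitting in the converse, where one must know a priori that the null vector witnessing $spark(\mathbf{H}) \le 2k$ actually has at most $2k$ nonzeros — this is exactly why one invokes a \emph{minimal} dependent column set. A secondary point worth stating is the degenerate case $spark(\mathbf{H}) = \infty$ (full column rank, $N(\mathbf{H}) = \{0\}$): then $spark(\mathbf{H}) > 2k$ holds vacuously and uniqueness of any solution is immediate, so the equivalence still reads correctly.
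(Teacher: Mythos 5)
Your proof is correct and follows essentially the same route as the paper's: the forward direction via the difference of two $k$-sparse solutions lying in $N(\mathbf{H})$ with at most $2k$ nonzeros, and the converse by splitting a ($\leq 2k$)-sparse null-space vector, obtained from a dependent column set of size $spark(\mathbf{H})$, into two $k$-sparse vectors with equal image. Your extra care about the minimal dependent set and the full-column-rank case only tightens steps the paper states more loosely (note the paper swaps the ``if''/``only if'' labels, but proves the same two implications).
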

\begin{proof}
See Appendix A
\end{proof}

From the definition, it is clear that $1 \leq spark ( \mathbf{H} ) \leq n+1$. If entries of $\mathbf{H}$ are i.i.d. random, then no $m$ columns in $\mathbf{H}$ would be linearly dependent with high probability so that $spark ( \mathbf{H} ) = m+1$. Using this, together with Theorem 1, one can conclude that the uniqueness is guaranteed for every solution satisfying $k \leq \frac{m}{2}$.

It is worth mentioning that it is not easy to compute the spark of a matrix since it requires a combinatoric search over all possible subsets of columns in $\mathbf{H}$. Thus, it is preferred to use a property that is easily computable. A tool that meets this purpose is the mutual coherence. The mutual coherence $\mu (\mathbf{H})$ is defined as the largest magnitude of normalized inner product between two distinct columns of $\mathbf{H}$:
\begin{eqnarray}
\mu (\mathbf{H}) = \max_{ i \neq j} \frac{| < \mathbf{h}_i , \mathbf{h}_j > |}{ \| \mathbf{h}_i  \|_2 \| \mathbf{h}_j \|_2  } . \label{eq:mu3}
\end{eqnarray}
In \cite{Strohmer}, it has been shown that for a full rank matrix, $\mu (\mathbf{H})$ satisfies $$1 \geq \mu (\mathbf{H}) \geq \sqrt{ \frac{n-m}{m (n-1)} }.$$ In particular, if $n \gg m$, we obtain an approximate lower bound as $\mu (\mathbf{H}) \geq \frac{1}{\sqrt{m}}$.
It has been shown that $\mu (\mathbf{H})$ is related to $spark ( \mathbf{H} )$ via $spark( \mathbf{H} ) \geq 1 + \frac{1}{\mu (\mathbf{H})}$ \cite{DonohoElad}. Using this together with Theorem 1, we get the following uniqueness condition.
\begin{theorem}[Corollary 1 \cite{cs_magazine}]\label{thm:spark}
If $k < \frac{1}{2} (1 + \frac{1}{ \mu (\mathbf{H}) })$, then for each measurement vector, there exists at most one $k$-sparse signal $\mathbf{s}$ satisfying $\mathbf{y} = \mathbf{H s}$.
\end{theorem}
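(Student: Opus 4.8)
The plan is to obtain the statement by chaining together two facts already recorded in the text: the coherence--spark inequality $spark(\mathbf{H}) \geq 1 + \frac{1}{\mu(\mathbf{H})}$ \cite{DonohoElad} and the spark uniqueness criterion of Theorem~1. Concretely, suppose $k < \frac{1}{2}\left(1 + \frac{1}{\mu(\mathbf{H})}\right)$. Multiplying by $2$ and applying the coherence--spark bound yields $2k < 1 + \frac{1}{\mu(\mathbf{H})} \leq spark(\mathbf{H})$, hence $spark(\mathbf{H}) > 2k$. By Theorem~1, for every measurement vector $\mathbf{y}$ there is then at most one $k$-sparse $\mathbf{s}$ satisfying $\mathbf{y} = \mathbf{H}\mathbf{s}$, which is exactly the claim. (Since $k$ is a non-negative integer one even gets the marginally stronger $2k \leq spark(\mathbf{H}) - 1$, but this is not needed.)

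If one instead wanted a self-contained argument, I would first re-derive the coherence--spark inequality. Fix any index set $\Lambda$ of $p$ columns of $\mathbf{H}$; since rescaling columns changes neither $spark(\mathbf{H})$ nor $\mu(\mathbf{H})$, assume these columns are unit-norm and form the $p\times p$ Gram matrix $\mathbf{G} = \mathbf{H}_\Lambda^T \mathbf{H}_\Lambda$, whose diagonal entries equal $1$ and whose off-diagonal entries have magnitude at most $\mu(\mathbf{H})$. By the Gershgorin disc theorem every eigenvalue of $\mathbf{G}$ lies within distance $(p-1)\mu(\mathbf{H})$ of $1$; hence if $(p-1)\mu(\mathbf{H}) < 1$, i.e. $p < 1 + \frac{1}{\mu(\mathbf{H})}$, then $\mathbf{G}$ is positive definite and the $p$ columns are linearly independent. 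Taking $p$ as large as possible gives $spark(\mathbf{H}) \geq 1 + \frac{1}{\mu(\mathbf{H})}$, and combining with Theorem~1 as above finishes the proof.

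The only step I expect to be nontrivial is the Gershgorin argument underlying the coherence--spark bound; everything else is a single inequality manipulation. Because the excerpt already states that bound as an established result, the proof placed here can legitimately just invoke it and reduce immediately to Theorem~1, so in the paper's context this proof is essentially one line and "See Appendix~A''-style pointers suffice.
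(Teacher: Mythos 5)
Your proposal matches the paper's own argument: the paper obtains this theorem precisely by combining the cited bound $spark(\mathbf{H}) \geq 1 + \frac{1}{\mu(\mathbf{H})}$ with Theorem~1, exactly as in your first paragraph. Your optional Gershgorin derivation of the coherence--spark inequality is a correct (and standard) way to make the cited bound self-contained, but it is not needed for agreement with the paper.
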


While the mutual coherence is relatively easy to compute, the bound obtained from this is  too strict in general. These days, restricted isometry property (RIP), introduced by Candes and Tao \cite{stable}, has been used popularly to establish the performance guarantee.
%
\begin{definition}
A system matrix $\mathbf{H}$ is said to satisfy RIP if for all $K$-sparse vector $\mathbf{s}$, the following condition holds
\begin{align}
(1-\delta) \|\mathbf{s}\|_2^2 \leq \|\mathbf{H}\mathbf{s}\|_2^2 \leq (1+\delta) \|\mathbf{s}\|_2^2 .
\end{align}
\end{definition}
In particular, the smallest $\delta$, denoted as $\delta_k$ is referred to as a RIP constant.
In essence, $\delta_k$ indicates how well the system matrix preserves the energy of the original signal.
On one hand, if $\delta_k \approx 0$, the system matrix is close to orthonormal so that the reconstruction of $\mathbf{s}$ would be guaranteed almost surely with a simple matching filtering operation (e.g., $\hat{\mathbf{s}} = \mathbf{H}^H \mathbf{y}$).
On the other hand, if $\delta_k \approx 1$, it might be possible that $\| \mathbf{H}\mathbf{s} \|_2^2 \approx \mathbf{0}$ (i.e., $\mathbf{s}$ is in the nullspace of $\mathbf{H}$) so that the measurements $\mathbf{y} = \mathbf{H}\mathbf{s}$ might not preserve any information on $\mathbf{s}$. In this case, obviously, the recovery of $\mathbf{s}$ would be nearly impossible.
%
%

Note that RIP is useful to analyze performance when the measurements are contaminated by the noise \cite{cs_magazine,gomp,sp,tzhang}. Additionally, by the help of random matrix theory, one can perform probabilistic analysis when the entries of the system matrix are i.i.d. random. Specifically, it has been shown that many random matrices (e.g., random Gaussian, Bernoulli, and partial Fourier matrices) satisfy the RIP with exponentially high probability, when the number of measurements scales linearly in the sparsity level \cite{eldar_book}. As a well-known example, if $\delta_{2k} < \sqrt{2} - 1$, then the solution in \eqref{eq:L1} obeys
\begin{eqnarray}
\| \mathbf{s}^* - \mathbf{s} \|_2 &\le& C_0  \| \mathbf{s} - \mathbf{s}_k\|_1/\sqrt{k} \\
\| \mathbf{s}^* - \mathbf{s} \|_1 &\le& C_0  \| \mathbf{s} - \mathbf{s}_k\|_1
\end{eqnarray}
for some constant $C_0$, where $\mathbf{s}_{k}$ is the vector $\mathbf{s}$ with all but the largest $k$ components set to 0.
One can easily see that if $\mathbf{s}$ is $k$-sparse, then $\mathbf{s} = \mathbf{s}_{k}$, and thus the recovery is exact.
If $\mathbf{s}$ is not $k$-sparse, then quality of recovery is limited by the difference of the true signal $\mathbf{s}$ and
its best $k$ approximation $\mathbf{s}_{k}$. For a signal which is not exact sparse but can be well approximated by a $k$-sparse signal (i.e., $\| \mathbf{s} - \mathbf{s}_{k} \|_1$ is small), we can still achieve fairly good recovery performance.

While the performance guarantees obtained by RIP or other tools provide a simple characterization of system parameters (number of measurements, system matrix, algorithm) of the recovery algorithm, these results need to be taken with a grain of salt, in particular when designing the practical wireless systems. This is because the performance guarantee, expressed as a sufficient condition, might be too stringent and working in asymptotic sense in many cases. Also, some of them are, from the wireless communications perspective, based on impractical assumptions (e.g., Gaussianity of the system matrix, strict sparsity of input vector).
Furthermore, it is very difficult to check whether the system setup satisfies the recovery condition or not.\footnote{For example, one need to check ${n \choose 2k}$ submatrices of $\mathbf{H}$ to identify the RIP constant $\delta_{2k}$.}
%
%

%
%
%
\section{Compressed Sensing for Wireless Communications}
\label{sec:cswc}
%
%
%

\subsection{System Model}
\label{sec:system_model}
In this section, we describe three distinct CS subproblems related to the wireless communications: sparse estimation, support identification, and sparse detection.
We begin with the basic system model where the signal of interest is transmitted over linear channels with additive white Gaussian noise (AWGN).
The input-output relationship in this model is
%
\begin{eqnarray}
\mathbf{y} = \mathbf{Hs} + \mathbf{v},  \label{eq:basic_model}
\end{eqnarray}
%
where $\mathbf{y}$ is the vector of received signals, $\mathbf{H} \in \mathcal{C}^{m\times n}$ is the system matrix,\footnote{In the compressed sensing literatures, $\mathbf{y}$ and $\mathbf{H}$ are referred to as measurement vector and sensing matrix (or measurement matrix), respectively.} $\mathbf{s}$ is the desired signal vector we want to recover, and $\mathbf{v}$ is the noise vector ($\mathbf{v} \sim \mathcal{CN}(\mathbf{0}, \sigma^2\mathbf{I})$). In this article, we are primarily interested in the scenario where the desired vector $\mathbf{s}$ is sparse, meaning that the portion of nonzero entries in $\mathbf{s}$ is far smaller than its dimension.
It is worth mentioning that even when the desired vector is non-sparse, one can either approximate it to a sparse vector or convert it to the sparse vector using a proper transform. For example, when the magnitude of nonzero elements is small, we can obtain an approximately sparse vector by ignoring negligible nonzero elements. For example, if $\mathbf{s} = [2 ~0 ~0 ~0 ~0 ~3 ~0.1 ~0.05 ~0.01 ~0]^T$, then we can approximate it to $2$-sparse vector $\mathbf{s}' = [2 ~0 ~0 ~0 ~0 ~3 ~0 ~0 ~0 ~0]^T$. In this case, the effective system model would be $\mathbf{y} = \mathbf{H} \mathbf{s}' + \mathbf{v}'$ where $\mathbf{v}' = \mathbf{H}_{\nu} \mathbf{s}_{\nu} + \mathbf{v}$ ($\mathbf{H}_{\nu} = [\mathbf{h}_7 ~\mathbf{h}_8 ~\mathbf{h}_9]$ and $\mathbf{s}_{\nu} = [0.1 ~0.05 ~0.01]^T$).
Also, even in the case where the desired vector is not sparse, one might choose a basis $\{ \psi_i \}$ to express the signal as a linear combination of basis. In the image/video processing society, for example, discrete Fourier transform (DFT), discrete cosine transform (DCT), and wavelet transform have long been used. Using a properly chosen basis matrix $\mathbf{\Psi} = [\psi_1 ~ \cdots ~ \psi_n]$, the input vector can be expressed as $\mathbf{s} = \sum_{i=1}^{n} x_i \psi_i = \mathbf{\Psi x}$ and thus
%
\begin{eqnarray}
\mathbf{y} = \mathbf{Hs} + \mathbf{v} = \mathbf{H \Psi x} + \mathbf{v}, \label{eq:new_model}
\end{eqnarray}
%
where $\mathbf{x}$ is a representation of $\mathbf{s}$ in $\mathbf{\Psi}$ domain. By the proper choice of the basis, one can convert the original non-sparse vector $\mathbf{s}$ into the sparse vector $\mathbf{x}$. Since this new representation does not change the system model, in the sequel, we will use a standard model in \eqref{eq:basic_model}.
Depending on the way the desired vector is constructed, the CS-related problem can be classified into three distinctive subproblems.
In the following subsections, we will discuss one by one.

\begin{figure}[t]
\centerline{\psfig{file=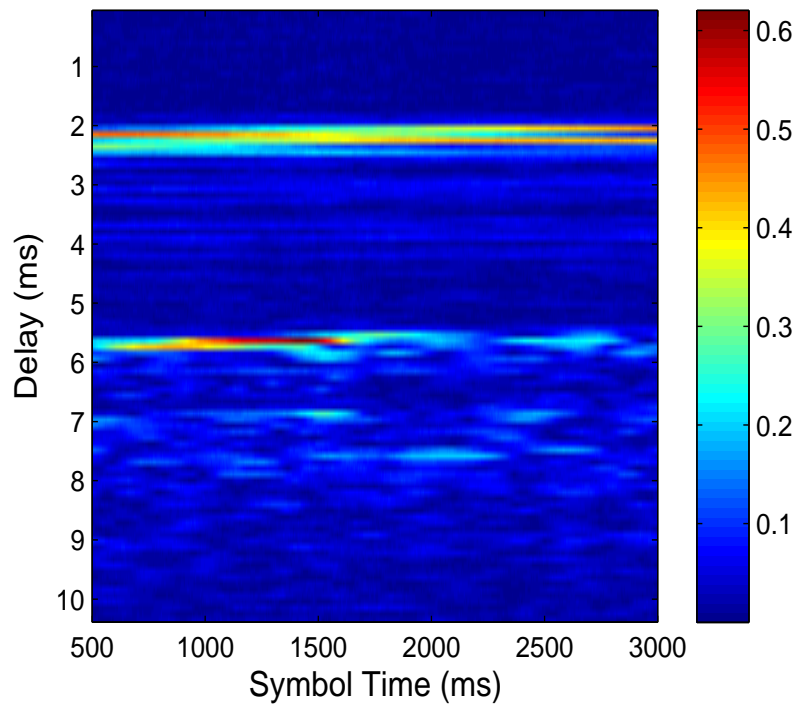,height=100mm,width=120mm}}
\caption{A record of the channel impulse response (CIR) of underwater acoustic channels measured at the coast of Martha's vinyard, MA, USA. At the given time, CIR can be well approximated as a sparse vector.}
\label{fig:sparse_ch}
\end{figure}

\subsection{Sparse Estimation}
\label{sec:sparse_est}

    In wireless communications, traditional way to estimate the information vector $\mathbf{s}$ from the model $\mathbf{y}=\mathbf{H}\mathbf{s}+\mathbf{v}$ is the linear minimum mean square (LMMSE) estimation, i.e.,
\begin{align}
\hat{\mathbf{s}} &= \mathbf{R}_s \mathbf{H}^{H} \left(\mathbf{H}\mathbf{R}_s \mathbf{H}^{H} + \mathbf{R}_v \mathbf{I} \right)^{-1}  \mathbf{y}, \\
& = \left(\mathbf{R}_s^{-1} + \mathbf{H}^{H} \mathbf{R}_v \mathbf{H}   \right)^{-1}\mathbf{H}^{H}  \mathbf{R}_v^{-1} \mathbf{y}
\end{align}
where $\mathbf{R}_s$ and $\mathbf{R}_v$ are the covariance matrix of  $\mathbf{s}$ and $\mathbf{v}$, respectively.
%
However, when the size of measurement vector $\mathbf{y}$ is smaller than the size of the vector $\mathbf{s}$, the system matrix $\mathbf{H}$ is rank-deficient and the quality of the LMMSE solution degrades significantly.
     When the signal vector is sparse, the problem to recover $\mathbf{s}$ from $\mathbf{y}$ is classified into {\it sparse estimation} problem and the CS technique becomes effective means to recover the target vector $\mathbf{s}$.
%

    %
     \subsubsection{Channel Estimation}
    %

    Channel estimation is a typical example of the sparse estimation. It has been shown from various measurement campaigns that the scattering effect of the environment is limited in many wireless channels. In other words, propagation paths tend to be clustered within a small spread and there are only a few scattering clusters.
    Thus, the communication channel can be sparsely represented in the delay-Doppler domain or the angular domain.
    In the delay-Doppler domain channel model, if the maximal delay and Doppler spreads are large and there are only few dominant paths, the channel can be readily modeled as a sparse vector \cite{berger_mag}. Typical examples include the ultra-wideband  channel \cite{paredes2007ultra,michelusi2012uwb1,michelusi2012uwb2}, underwater acoustic channel \cite{cs_uwa,berger,bajwa2010compressed}, and even some cellular channels (e.g., extended vehicular-A (EVA) or extended typical urban (ETU) channel model in long term evolution (LTE) systems\cite{LTE_standard}).
    %
    %
    In the massive multi-input multi-output (MIMO) \cite{dist, prior} or millimeter wave \cite{el2014spatially, mmwave} systems where the transmitter and/or receiver are equipped with large antenna array,  due to the limited number of scattering clusters and increased spatial resolvability, the channel can be sparsely represented in the angular domain  \cite{sayeed2002deconstructing, Tse_book, Van_book}.
   Also, recent experiments performed on millimeter wave channels \cite{akdeniz2014millimeter, samimi20163} reveal the limited scattering clusters in angular domain.


     When the channel can be sparsely represented,  CS-based channel estimation can offer significant performance gain over the conventional approaches.
     %
     For example, we consider single carrier systems where the channel can be modeled as a \(m\)-tapped channel impulse response $\mathbf{h} = [h_0 ~ h_1 ~ \cdots ~ h_{n-1}]^T$. The received signal $\mathbf{y} = [y_0 ~ y_1 ~ \cdots ~ y_{m-1}]^T$   is expressed as a linear convolution of the channel impulse response $\mathbf{h}$  and the pilot sequence \(\mathbf{p}=[p_0 ~ p_1 ~ \cdots ~ p_{m-n-2}]^{T} \), i.e. $y_t = \sum_{i=0}^{n-1} {h}_i * {p}_{t-i}$ for $0\leq t \leq m-1$. By constructing the system matrix using the known pilot sequence, one can express the relationship between $\mathbf{y}$ and $\mathbf{h}$ using a matrix-vector form
     \begin{align}
     \mathbf{y} = \mathbf{P h} + \mathbf{v}
     \end{align}
     where
    $\mathbf{P}$ is the Toeplitz matrix constructed from the pilot symbols \(\mathbf{p}=[p_0 ~ p_1 ~ \cdots ~ p_{m-n-2}]^{T} \).
    Whereas, in the orthogonal frequency division multiplexing (OFDM) systems, we allocate the pilot symbols \(\mathbf{p}_f=[p_0 ~ p_1 ~ \cdots ~ p_{m}]^{T} \) in the frequency domain so that the linear convolution operation is converted to  element-by-element product
        between $\mathbf{p}_f$ and the frequency-domain channel response
  \cite{meng,ding,hlawatsch,chan_choi}
      \begin{align}
      \mathbf{y} &= {\rm diag}(\mathbf{p}_f) \mathbf{\Phi} \mathbf{D}\mathbf{h} + \mathbf{v} \\
      &= \mathbf{P}\mathbf{h} + \mathbf{v}
      \end{align}
      where $\mathbf{D}$ is the $n \times n$ DFT matrix which plays a role to convert the time-domain channel response $\mathbf{h}$ to frequency domain response, and $\mathbf{\Phi}$ is the $m \times n$ row selection matrix determining the location of pilots in frequency domain.
          For both cases, the number of nonzero taps in $\mathbf{h}$ is small due to the limited number of scattering clusters, and their positions are unknown. Note that this problem has the same form as (\ref{eq:basic_model}) and CS techniques are effective in recovering $\mathbf{h}$ from the measurements $\mathbf{y}$ \cite{cotter2002sparse,prasad}.

    As an example to exploit the sparsity in the angular domain, we consider  a downlink massive MIMO system where the base station is equipped with  \(n\) antennas and the user has only single antenna. The user  estimates the channel vector from the $n$ base-station antennas to itself. For this purpose, the base-station transmits the training pilots from $n$ antennas over $m$ time slots. Under the narrowband block fading channel model, the pilot measurement acquired by the user over $m$ time slots can be expressed as \(\mathbf{y}=\mathbf{A}\mathbf{h}+\mathbf{v}\), where \(\mathbf{h}\in\mathbb{C}^{n\times 1}\) is the downlink channel vector, \(\mathbf{A}\in\mathbb{C}^{n\times m}\) is the matrix containing the training pilots, and  \(\mathbf{y}\) and \(\mathbf{v}\) are the received signal and noise vectors, respectively.
     When the number of antennas \(n\) is large and there are only limited number of scattering clusters in environment, we can use the angular domain channel representation \cite{sayeed2002deconstructing, Tse_book, bajwa2010compressed}, i.e.,  \(\mathbf{h}=\mathbf{D}\mathbf{s}\), where \(\mathbf{D}\) is the $n \times n$ DFT matrix (whose columns correspond to the steering vectors corresponding to $n$  equally spaced directions) and \(\mathbf{s}\) is the angular domain coefficients with only a few number of nonzeros. Then, the received downlink training signal can be written as \(\mathbf{y}=\mathbf{A} \mathbf{h}+\mathbf{n}=\mathbf{A}\mathbf{Ds}+\mathbf{n}\) which has the same form as  (\ref{eq:new_model}).
     After obtaining the estimate $\hat{\mathbf{s}}$ using the CS technique, we can reconstruct the channel from \(\hat{\mathbf{h}}=\mathbf{D}\hat{\mathbf{s}}\).

One clear benefit of the CS-based channel estimation is to achieve the reduction in the training overhead. In the above examples, the size $m$ of the measurement vector  indicates the amount of time and frequency resource needed to transmit pilot symbols. With conventional channel estimation schemes such as the LMMSE estimator, $m$ should exceed $n$, leading to substantial training overhead.
    %
    However, by employing the CS techniques, it only requires \(m\) to be proportional to the number of nonzero elements in the sparse vector. 
    %
    In other words, when the channel can be sparsely represented, the channel can be acquired by the CS technique using much smaller amount of resources than what conventional methods require \cite{gao,lau,gao2,naffouri,chan_choi}.

    In  Fig.~\ref{fig:mmWave}, we compare the performance of the CS algorithm (OMP) and the conventional LMMSE technique for channel estimation in massive MIMO systems. We set the number of antennas $n$ to $64$,  the training overhead $m$ to $16$, and the number of multipaths $k$ is  4.
  As shown in the figure, the CS algorithm outperform the conventional LMMSE channel estimator by a large margin except low signal to noise power ratio (SNR) regime.
\begin{figure}[h]
\begin{center}
\hspace{-3mm}\includegraphics[width=110mm,height=95mm]{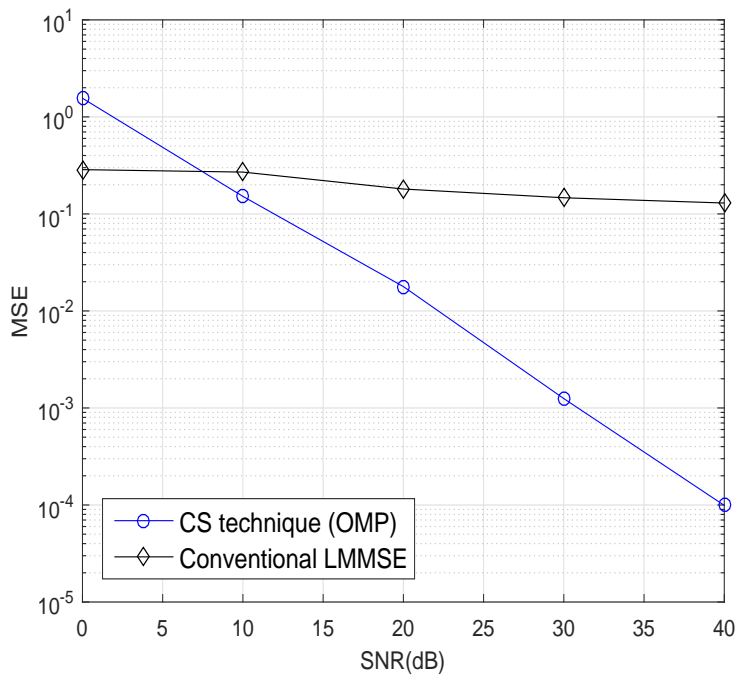}
\caption{Performance of the CS technique and the conventional LMMSE for  angular domain
channel estimation in massive MIMO systems.}
\label{fig:mmWave}
\end{center}
\end{figure}

    \vspace{0.5cm}
    \subsubsection{Impulse Noise Cancellation in OFDM Systems}
    While OFDM is well suited for frequency selective channel with Gaussian noise, when the unwanted impulsive noise is added, the performance would be degraded severely. In fact, since the impulse in the time domain corresponds to the constant in the frequency domain, very strong time domain impulses will give negative impact to most of frequency domain symbols.
    Since the span of impulse noise is short in time and thus can be considered as a sparse vector, we can use the CS technique to mitigate this noise \cite{Tareq_Al_Naffouri11,ana}. First, the discrete time complex baseband equivalent channel model for the OFDM signal is expressed as
    %
    \begin{eqnarray}
        \mathbf{y} = \mathbf{H} \mathbf{x}_t + \mathbf{n} \label{eq:basic_ofdm}
    \end{eqnarray}
    %
    where $\mathbf{y}$ and $\mathbf{x}_t$ are the time-domain receive and transmit signal blocks (after the cyclic prefix removal), $\mathbf{H}$ is the circulant matrix generated by the cyclic prefix, and $\mathbf{n}$ is additive Gaussian noise vector.
    When the impulse noise $\mathbf{e}$ is added, the received signal vector becomes
    %
    \begin{eqnarray}
        \mathbf{y} = \mathbf{H} \mathbf{x}_t + \mathbf{e} + \mathbf{n}. \label{eq:basic_ofdm}
    \end{eqnarray}
    %

    Note that the circulant matrix $\mathbf{H}$ can be eigen-decomposed by the DFT matrix $\mathbf{F}$, i.e., $\mathbf{H} = \mathbf{F}^H \mathbf{\Lambda} \mathbf{F}$ \cite{Moon}. Also, the time-domain transmit signal in OFDM systems is expressed as
    $\mathbf{x}_t = \mathbf{F}^H \mathbf{x}_f$.
    where $\mathbf{x}_f$ is the frequency-domain symbol vector.
    Let $q$ be the number of subcarriers in which symbols are being transmitted, then the relationship between $\mathbf{x}_f$ and the true (nonzero) symbol vector $\mathbf{s}$ with dimension $q (\leq n)$ is $$\mathbf{x}_f = \mathbf{\Pi} \mathbf{s}$$ where $\mathbf{\Pi}$ is $n \times q$ selection matrix containing only one element being one in each column and rest being zero.
    With these, \eqref{eq:basic_ofdm} can be rewritten as
    %
    \begin{eqnarray}
        \mathbf{y} &=&
        (\mathbf{F}^H \mathbf{\Lambda} \mathbf{F}) \mathbf{x}_t + \mathbf{e} + \mathbf{n} \nonumber \\ &=&
        (\mathbf{F}^H \mathbf{\Lambda} \mathbf{F}) (\mathbf{F}^H \mathbf{\Pi} \mathbf{s}) + \mathbf{e} + \mathbf{n} \nonumber \\ &=& \mathbf{F}^H \mathbf{\Lambda} \mathbf{\Pi} \mathbf{s} + \mathbf{e} + \mathbf{n}.
    \end{eqnarray}

    Let $\mathbf{y}'$ be the received vector after the DFT operation ($\mathbf{y}' = \mathbf{F y}$). Then, we have
    %
    \begin{eqnarray}
    \mathbf{y}' = \mathbf{\Lambda} \mathbf{\Pi} \mathbf{s} + \mathbf{F e} + \mathbf{n}' \label{eq:in4}
    \end{eqnarray}
    %
    where $\mathbf{n}' = \mathbf{F n}$ is also Gaussian having the same statistic of $\mathbf{n}$.
    In removing the impulse noise, we use the subcarriers free of modulation symbols. By projecting $\mathbf{y}'$ onto the space where symbol is not assigned (i.e., orthogonal complement of the signal subspace), we obtain\footnote{As a simple example, if $\mathbf{F}$ is $4\times 4$ DFT matrix and the first and third subcarrier is being used $\mathbf{s} = \matc{s_1 \\ s_3}$, then the selection matrix is $\mathbf{\Pi} = \matc{1 & 0  \\ 0 & 0 \\ 0 & 1  \\ 0 & 0 }$ and the projection operator is $\mathbf{P} = \matc{0 & 0 & 0 & 0 \\ 0 & 1 & 0 & 0 \\ 0 & 0 & 0 & 0   \\ 0 & 0 & 0 & 1  }$. Since the $1$st and $3$rd elements of $\mathbf{\Lambda} \mathbf{\Pi} \mathbf{s}$ are nonzero and $2$nd and $4$th columns contain nonzero elements, we have $\mathbf{P} \mathbf{\Lambda} \mathbf{\Pi} \mathbf{s} = \mathbf{0}$.}
    %
\begin{figure}[t]
\begin{center}
	\includegraphics[width=110mm,height=95mm]{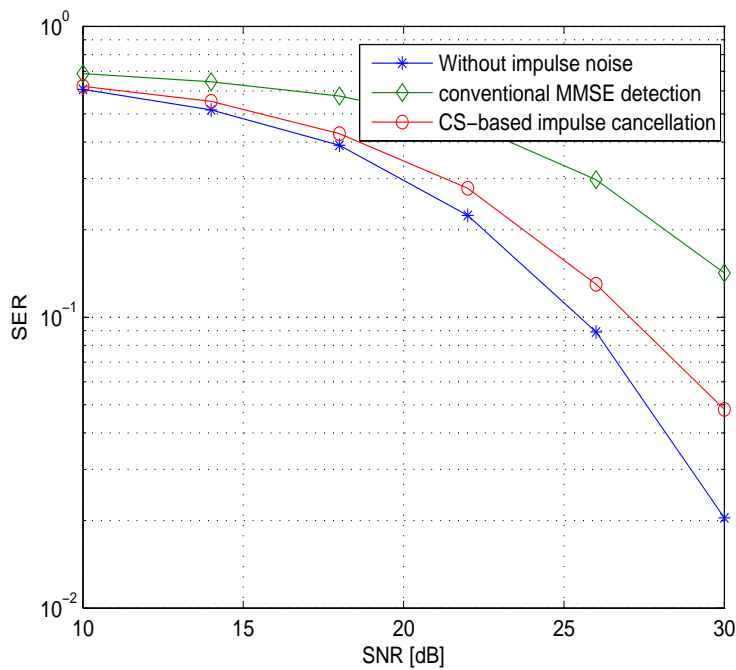}
\caption{Performance of OFDM systems with impulsive noise ($n = 64$, $q = 12$). Span of impulse is $2$ among $64$ samples. and the impulse noise power is set to $20$ dB higher than the average sample power. Symbols are generated by the quadrature phase shift keying (QPSK) modulation and the OMP algorithm is used for the sparse impulse noise recovery.}
\label{fig:cs_impulse}
\end{center}
\end{figure}
    \begin{eqnarray}
    \mathbf{y}'' = \mathbf{P} \mathbf{y}' = \mathbf{P} \mathbf{F e} + \mathbf{v} \label{eq:in5}
    \end{eqnarray}
    where $\mathbf{v} = \mathbf{P} \mathbf{n}'$ is the sub-sampled noise vector.
    Note that $\mathbf{y}''$ is a projection of $n$-dimensional impulse noise vector onto a subspace of dimension $m (\ll n)$.
    In the CS perspective, $\mathbf{y}''$ and $\mathbf{P F}$ are the observation vector and the sensing matrix, respectively, and hence the task is to estimate $\mathbf{e}$ from $\mathbf{y}''$.
    %
    %
    Then, by subtracting $\mathbf{F} \hat{\mathbf{e}}$ from the received vector $\mathbf{y}'$, we obtain the modified received vector
    \begin{eqnarray}
    \hat{\mathbf{y}}' &=& \mathbf{y}' -  \mathbf{F} \hat{\mathbf{e}} \nonumber \\
                      &=& \mathbf{\Lambda} \mathbf{\Pi} \mathbf{s} + \mathbf{F} ( \mathbf{e} - \hat{\mathbf{e}} ) + \mathbf{n}' . \label{eq:in6}
    \end{eqnarray}
    As show in Fig. \ref{fig:cs_impulse}, as a result of the impulsive noise cancellation, we obtain the noise mitigated observation and improved detection performance.

\subsection{Support Identification}
\label{sec:support_identify}
    Set of indices corresponding to nonzero elements in $\mathbf{s}$ is called the support $\Omega_{\mathbf{s}}$ of $\mathbf{s}$\footnote{If $s = [~0 ~0 ~1 ~0 ~2]$, then $\Omega_{\mathbf{s}} = \{3, 5\}$.} and the problem to identify the support is called support identification problem. In some applications, support identification and sparse signal recovery are performed simultaneously. In other case, signal estimation is performed after the support identification. Support identification is important as a sub-step for the sparse estimation algorithm. It is also useful in its own right when an accurate estimation of nonzero values is unnecessary.

    \subsubsection{Spectrum Sensing}
    As a means to improve the overall spectrum efficiency, cognitive radio (CR) has received much attention recently \cite{cr_mag}. CR technique offers a new way of exploiting {\it temporarily} available spectrum. Specifically, when a primary user (license holder) does not use a spectrum, a secondary user may access it in such a way that they do not cause interference to primary users.
    Clearly, key to the success of the CR technology is the accurate sensing of the spectrum (whether the spectrum is empty or used by a primary user) so that secondary users can safely use the spectrum without hindering the operation of primary users.
    Future CR systems should have a capability to scan a wideband of frequencies, say in the order of a few GHz. In this case, design and implementation of high-speed analog to digital converter (ADC) become a challenge since the Nyquist rate might exceed the sampling rate of the state-of-the-art ADC devices, not to mention the huge power consumption. One can therefore think of an option of scanning each narrowband spectrum using the conventional technique. However, conventional approach is also undesirable since it takes too much time to process the whole spectrum (if done in sequential manner) or it is too expensive in terms of cost, power consumption, and implementation complexity (if done in parallel).
    %
    %
\begin{figure}[t]
\begin{center}
	\includegraphics[width=100mm]{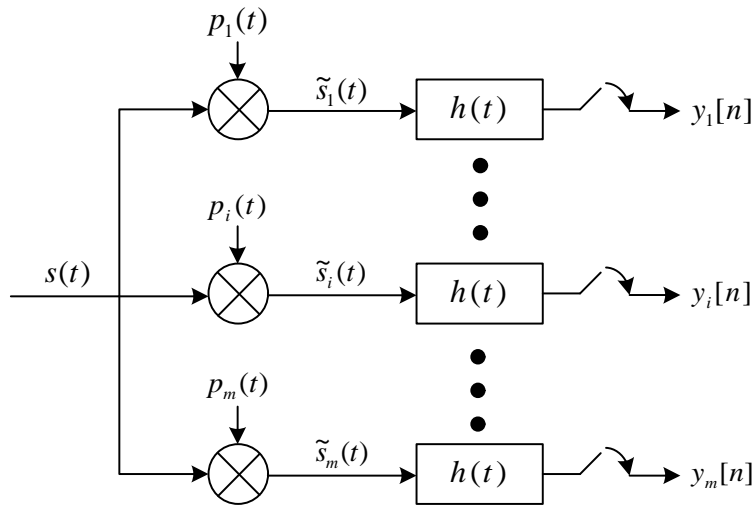}
\caption{Block diagram of modulated wideband converter for spectrum sensing.}
\label{fig:sparse_det}
\end{center}
\end{figure}

    Recently, CS-based spectrum sensing technique has received much attention for its potential to alleviate the sampling rate issue of ADC and the cost issue of RF circuitry. In the CS perspective, the spectrum sensing problem can be translated into the problem to find the nonzero position of vector, which is often referred to as the {\it support identification} or {\it model selection} problem.
    One popular approach of the CS-based spectrum sensing problem, called modulated wideband converter (MWC), is formulated as follows \cite{eldar,geert}. First, we multiply a pseudo random function $p(t)$ with period $T_p$ to the time domain continuous signal $s(t)$. Since $p(t)$ is a periodic function, it can be represented as a Fourier series ($p(t) = \sum_k c_k e^{j 2\pi k / T_p}$),
    Let $s(f)$ be the frequency domain representation of $s(t)$, then the Fourier transform of the modulated signal $\tilde{s}(t) = p(t) s(t)$ is expressed as
    %
    \begin{eqnarray}
        \tilde{s}(f) = \sum_{k=-\infty}^{\infty} c_k s (f - k f_p) , \label{eq:mwc1}
    \end{eqnarray}
    %
    where $f_p = 1/T_p$.
    %
    %
    The low-pass filtered version $\tilde{s}'(f)$ will be expressed as $\tilde{s}'(f) = \sum_{k=-L}^{L} c_k s (f - k f_p)$.
    Denoting $y[n]$ as the discretized sequence of $\tilde{s}'(t)$ after the sampling (with rate $T_s$), we obtain the frequency domain relationship as\footnote{If $u[n] = w(t)$ at $t = n T_s$, then $u(e^{j\Omega}) = w(f)$ where $\Omega = 2 \pi f T_s$. }
    %
    \begin{eqnarray}
        y( e^{j 2 \pi f T_s}) = \sum_{k = -L}^{L} c_k s (f - kf_p).  \label{eq:mwc2}
    \end{eqnarray}
    %
    When this operation is done in parallel for different modulating functions $p_i (t)$ ($i = 1, 2, \cdots, m$), we have multiple measurements $y_i( e^{j 2 \pi f T_s})$. After stacking these, we obtain $$\mathbf{y} = [ y_1 ( e^{j 2 \pi f T_s}) ~ \cdots ~ y_m ( e^{j 2 \pi f T_s}) ]^T$$ and the corresponding matrix-vector form $\mathbf{y} = \mathbf{H s}$ where $\mathbf{s} = [s(f - Lf_p) ~ \cdots ~ s(f + L f_p) ]^T$ and $\mathbf{H}$ is the sensing matrix relating $\mathbf{y}$ and $\mathbf{s}$. 
    Since large portion of the spectrum band is empty, $\mathbf{s}$ can be readily modeled as a sparse vector, and the task is summarized as a problem to find $\mathbf{s}$ from $\mathbf{y} = \mathbf{H} \mathbf{s}$.
    It is worth mentioning that this problem is distinct from the sparse estimation problem since an accurate estimation of nonzero values is unnecessary. Recalling that the main purpose of the spectrum sensing is to identify the empty band, not the occupied one, it would not be a serious problem to slightly increase the false alarm probability (by false alarm we mean that the spectrum is empty but decided as an occupied one). However, special attention should be paid to avoid the misdetection event since the penalty would be severe when the occupied spectrum is falsely declared to be an empty one.
%
%
%
\begin{figure}[t]
\begin{center}
	\includegraphics[width=110mm,height=95mm]{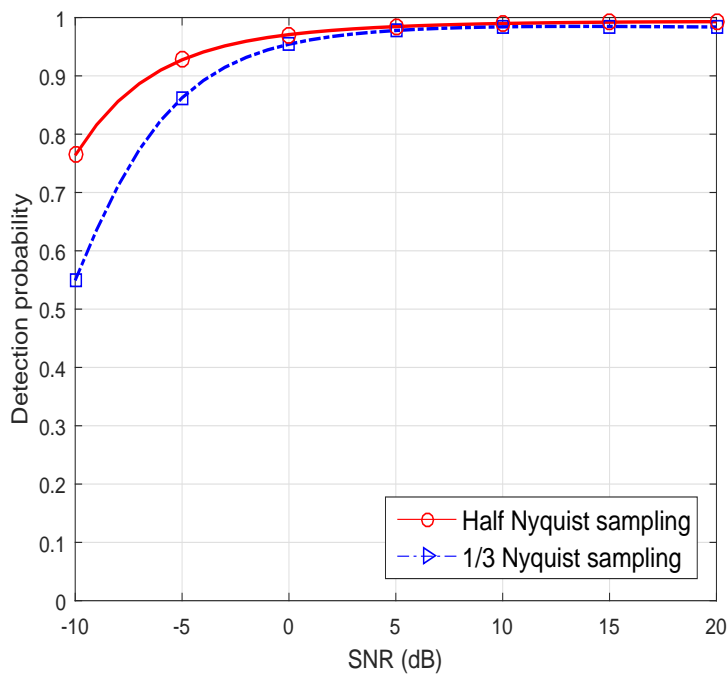}
\caption{Probability of correct support detection in the CS-based spectrum sensing. The signal occupies only 3 bands, each of width 50MHz, in a wide spectrum $5$GHz.}
\label{fig:sparse_det}
\end{center}
\end{figure}

\vspace{0.5cm}
\subsubsection{Detection of Active Devices for IoT Systems}
\label{sec:sparse_det}

    In recent years, internet of things (IoT), providing network connectivity of almost all things at all times, has received much attention for its plethora of applications such as healthcare, automatic metering, environmental monitoring (temperature, humidity, moisture, pressure), surveillance, automotive systems, etc \cite{iot2}. Recently, it has been shown that the CS can be useful for various IoT applications including energy saving \cite{dan}, data acquisition and streaming \cite{shancang,melodia}, positioning \cite{feng,bowu,rss_cs,lcs},  monitoring \cite{congwang}, etc. See \cite{huang} for comprehensive survey.
    %
    %
    Common feature of the IoT networks is that the node density is much higher than the cellular network, yet the data rate is very low and not every device transmits information at a given time. Hence, the number of active devices in IoT environments is in general much smaller than that of the inactive devices. Since it is not easy to acquire the active user information from complicated message handshaking, it is of importance to identify what devices are active at the access point (AP). This problem is one of key issues in massive machine type communications (mMTC) in 5G wireless communications \cite{iotlte,metis}.
    %
    %
    %

\begin{figure}[h]
\begin{center}
\hspace{-3mm}\includegraphics[width=110mm,height=95mm]{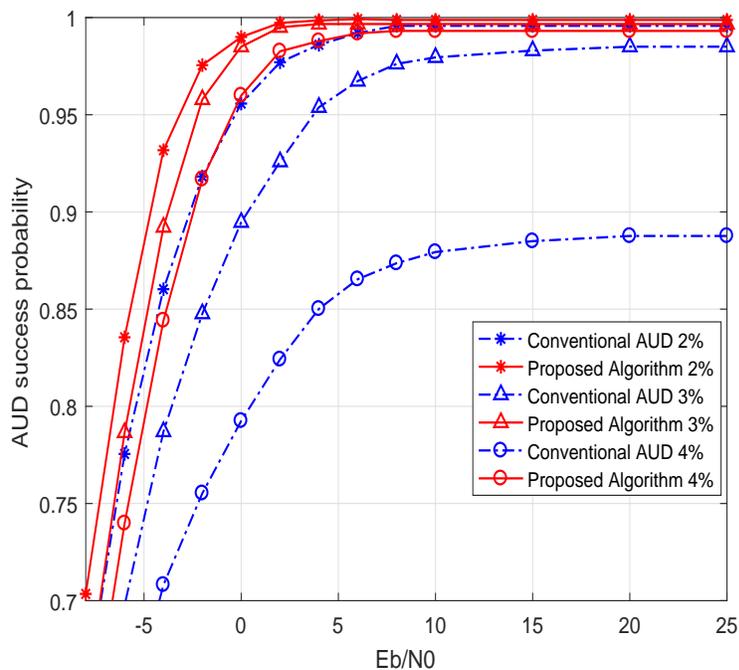}
\caption{Performance of AUD (FFT size is 512, length of CIR is 36, and among them three taps are dominant.)}
\label{fig:aud}
\end{center}
\end{figure}

    Suppose that there are $n$ devices in the networks and only $k (\ll n)$ devices among them are trying to access to AP.
    In each device, the quasi-orthogonal signature (codeword)  with length $m ( < n)$ is assigned from the codebook $Q = \{\mathbf{q}_1, \cdots, \mathbf{q}_{n}\}$.
   Since $m<n$,  orthogonality among code sequences cannot be guaranteed (i.e., $\mathbf{q}_i \mathbf{q}_j \neq 0 $ for $i \neq j$).
   When multiple devices send their signature sequences to AP, the first thing to be done at the AP is to identify which devices are transmitting information to the AP. Channel estimation, symbol detection, and decoding are performed after this active user detection (AUD).
   Under the flat fading channel assumption, which is true for narrowband IoT systems (e.g., eMTC and NB-IoT in 3GPP Rel. 13 \cite{iotlte}), the received vector at the AP is expressed as
\begin{align}
\mathbf{y} &= \sum_{i=1}^{n} h_i \mathbf{q}_{i} p_i + \mathbf{v} \nonumber \\ 
&= [\mathbf{q}_{1} ~ \cdots ~ \mathbf{q}_{n}] \matc{h_1 p_1 \\ \vdots \\ h_n p_n } + \mathbf{v} \nonumber \\
& = \mathbf{H} \mathbf{s} + \mathbf{v} \label{eq:hsn}
\end{align}
where $p_i$ is the symbol transmitted from the $i$th device, $h_i$ is the (scalar) channel from the $i$th device to the AP,  $\mathbf{H} = [\mathbf{q}_{1} ~ \cdots ~ \mathbf{q}_{n} ]$ is the $m \times n$ matrix generated by the codeword vector $\mathbf{q}_{i}$\footnote{There are various approaches to generate the codebook for machine type devices. See, e.g., sparse coding multiple access (SCMA) \cite{scma}.},
and $\mathbf{v}$ is the $m \times 1$ noise vector.
%
In this setup, the problem to identify which devices are active is translated into the problem to find out the support of $\mathbf{s}$. As depicted in Fig.~\ref{fig:aud}, the CS-based AUD performs better than conventional scheme. After the AUD, columns $\mathbf{q}_i$ and symbol element $s_i$ corresponding to inactive users can be removed from the system model and as a result, we obtain the overdetermined system.
 Specifically, if the set $S = \{ s_1, ~ \cdots, ~ s_k \}$ is the support (index set of active users), then the system model to estimate the channel is
\begin{eqnarray}
\mathbf{y} = [ p_{s_1} \mathbf{q}_{s_1} & \cdots & p_{s_k} \mathbf{q}_{s_k} ] \matc{ h_{s_1}  \\ \vdots \\ h_{s_k}  } + \mathbf{v} \nonumber \\
= \mathbf{T} \mathbf{h} + \mathbf{v}. \label{eq:hsn2}
\end{eqnarray}
%
As long as $k < m$, \eqref{eq:hsn2} is modeled as an overdetermined system so that one can estimate the channel vector $\mathbf{h}$ accurately using the conventional LS or LMMSE technique.

%
%
%
%
%

\subsubsection{Localization for wireless sensor network}

 In wireless sensor networks,  location information of the sensor nodes is needed to perform the location-aware resource allocation and scheduling. Location information is also important ingredient for the location-based service. In a nutshell, the localization problem is to estimate the position of the target nodes using the received signals at multiple APs \cite{correal}. For example, we consider the wireless sensor network equipped with multiple APs at different places to receive the signals from multiple targets. These APs collect the received signals and then send them to the server where localization task is performed (see Fig.~\ref{fig:iot_loc}). Assume that there are $k$ targets and $m$ APs measure the received signal strength (RSS). The location of the targets is represented in the two-dimensional location grid of size $n$. The conventional localization algorithm calculates the average RSS for  all candidates over the grid, and searches for the one closest to the measurements. Due to a large number of elements in the location grid, this approach will incur high computational complexity. Since the number of target nodes is much fewer than the number of elements in the grid,  the location information of the targets is sparse, and thus the CS technique can be used to perform efficient  localization  \cite{feng,bowu,rss_cs}. Let the average RSS from the $j$th grid element to the $i$th AP be $p_{i,j}$.
 \begin{figure}[t]
\begin{center}
\hspace{-3mm}\includegraphics[scale = 0.9]{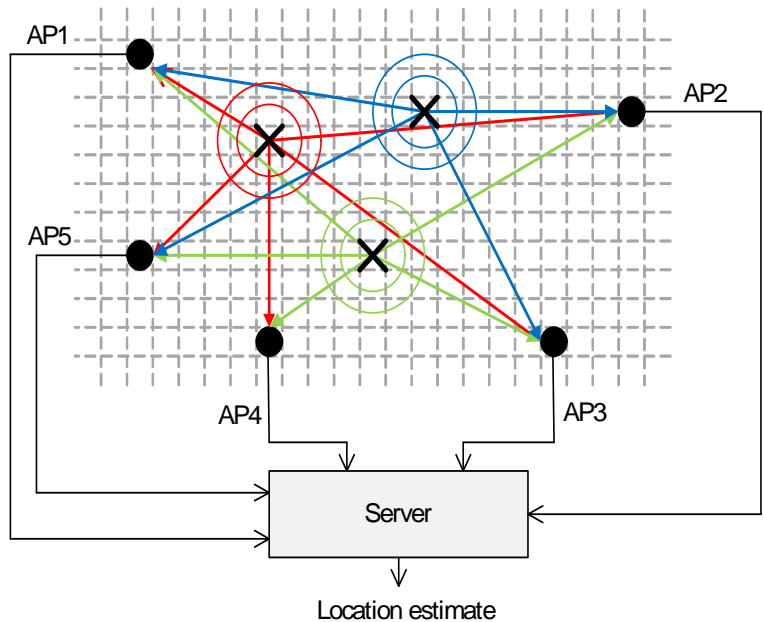}
\caption{Illustration of localization in wireless sensor network.}
\label{fig:iot_loc}
\end{center}
\end{figure}
The average RSS can be obtained from offline measurement at certain reference points (i.e., fingerprinting method \cite{rss,fingerprinting}) or calculated via appropriate path loss models. In the fingerprinting method, we obtain the average RSS at non-reference points via two dimensional interpolation. The measurement received by the $m$ APs $\mathbf{y} = [y_1, \cdots, y_{m}]^{T}$ is expressed as
 \begin{align}
 \mathbf{y} = \mathbf{P} \mathbf{s}  + \mathbf{v},
 \end{align}
 where $\mathbf{P}$ is the $m \times n$ matrix and the $i$th element of $\mathbf{s}$ has zero value if the target does not lie at the $i$th grid element, and $\mathbf{v}$ is the perturbation term capturing the noise and modeling error. The support of $\mathbf{s}$ represents the location of the target nodes and the CS technique can be employed to identify the support.

    \vspace{0.3cm}
    \subsubsection{Direction Estimation}
    Yet another important application of the support identification problem is the estimation of the angle of arrival (AoA) and angle of departure (AoD)  in wireless communications.
    In the millimeter wave (mmWave) communications, the carrier frequency increases up to tens or hundreds of GHz so that the transmit signal power decays rapidly with distance and wireless channels exhibit a few strong multipaths components caused by the small number of dominant scatterers.
    In fact, signal components departing and arriving from particular angles are very few compared to the total number of angular bins \cite{rappaport}. When the estimates of AoA and AoD  are available, beamforming with high directivity is desirable to overcome the path loss of mmWave wireless channels. The sparsity of the channel in the angular domain can be exploited for an effective estimation of the AoA and AoD via CS techniques \cite{el2014spatially,mmwave,mmwave_overview}.
    Consider a mmWave system where the transmitter and the receiver are equipped with $N_t$ and $N_r$ antennas, respectively.
    When employing the uniform linear array antennas, MIMO channels in the angular domain is expressed as
    \begin{align}
    \mathbf{H} = \mathbf{A}_{r} \mathbf{\Phi}_a \mathbf{A}_{t}^{H},
    \end{align}
    where  $\mathbf{A}_{r} = \left[\mathbf{a}_r(\phi_1), ...,\mathbf{a}_r(\phi_{L_r})  \right] \in {\mathcal{C}^{N_r \times L_r}}$, $\mathbf{A}_{t} = \left[\mathbf{a}_t(\phi_1), ...,\mathbf{a}_t(\phi_{L_t})  \right] \in {\mathcal{C}^{N_t \times L_t}}$, $L_r$ and $L_t$ are the number of total angular bins for the receiver and the transmitter, respectively, $\mathbf{a}_r(\phi_i)$ and $\mathbf{a}_t(\phi_i)$ are the steering vectors corresponding to the $i$-th angular bin for AoA and AoD, respectively, 
    and
    $\mathbf{\Phi}_a$ is the $L_r\times L_t$ path-gain matrix whose $(i,j)$th entry contains the path gain from $j$th angular bin for the transmitter to $i$th angular bin for the receiver.
    Note that due to the sparsity of the channel in the angular domain, only a few elements of $\mathbf{\Phi}_a$ (i.e., $k$) are nonzero.
    In order to estimate AoA and AoD, the base-station transmits the known symbols  at the designated directions over $T$ time slots, where
    the transmit beamforming vector $\mathbf{f}_i \in {\mathcal{C}^{N_t \times 1}}$ is used for the $i$th beam transmission.
    Then, the received vector $\mathbf{z}_i \in \mathcal{C}^{N_r \times 1}$ corresponding to the $i$th beam transmission  can be expressed as
    \begin{align}
     \mathbf{z}_i = \mathbf{H} \mathbf{f}_i {x}_i + \mathbf{n}_i,
    \end{align}
    where $\mathbf{n}_i$ is the noise vector and $x_i$ is the known symbol.
       When the receiver applies the combining matrix $\mathbf{W} \in {\mathcal{C}^{N_r \times   Q}}$, we can obtain the measurement vector  $\mathbf{y}_i \in \mathcal{C}^{Q \times 1}$ as
     \begin{align}
    \mathbf{y}_i &= \mathbf{W}^{H} \mathbf{H} \mathbf{f}_i {x}_i + \mathbf{W}^{H} \mathbf{n}_i, \\
    &= \mathbf{W}^{H} \mathbf{H} \mathbf{f}_i {x}_i + \mathbf{v}_i,
    \end{align}
     If we collect the received signals over $T$ beam transmissions, i.e., $\mathbf{y} = [\mathbf{y}_1^{T}, ..., \mathbf{y}_T^{T}]^T$ and let $x_i=1$,  then we have \cite{mmwave}
 \begin{align}
 \mathbf{y} &= {\rm vec} \left( \mathbf{W}^{H} \mathbf{H} \mathbf{F}  \right) + \mathbf{v} \\
 &= \left( \mathbf{F}^{T} \otimes \mathbf{W}^{H} \right) {\rm vec} (\mathbf{A}_{r} \mathbf{\Phi}_a \mathbf{A}_{t}^{H}) + \mathbf{v} \\
 & = \left( \left( \mathbf{F}^{T} (\mathbf{A}_t)^{*} \right) \otimes \left( \mathbf{W}^{H} \mathbf{A}_r \right) \right)  \mathbf{s}+ \mathbf{v}, \\
 &=\mathbf{P} \mathbf{s} + \mathbf{v}
 \end{align}
 where ${\rm vec}(\cdot)$ and $()^*$ are the vectorization and the conjugation operations, respectively, $\otimes$ is the  Kronecker  product,
$\mathbf{v}=  [\mathbf{v}_1^{T}, ..., \mathbf{v}_T^{T}]^T$, $\mathbf{F} = \left[\mathbf{f}_1, \cdots, \mathbf{f}_{T}\right]$,   and $\mathbf{s} ={\rm vec}( \mathbf{\Phi}_a)$.
     Note that the indices of the nonzero elements in $\mathbf{s}$ correspond to the AoA and AoD information. The dimension of the vector $\mathbf{s}$ (i.e., $n = L_r \times L_t$) needs to be large in order to estimate the AoA and AoD at high resolution. On the other hand, the size of the measurement vector $n(=T \times Q)$ cannot be as large as the value of $m$ due to the resource constraints.  Since the number of nonzero elements is small, $\mathbf{s}$ is modeled by the  sparse vector and the CS technique is useful to find the support of $\mathbf{s}$.

\begin{figure}[t]
\begin{center}
\hspace{-3mm}\includegraphics[width=110mm,height=95mm]{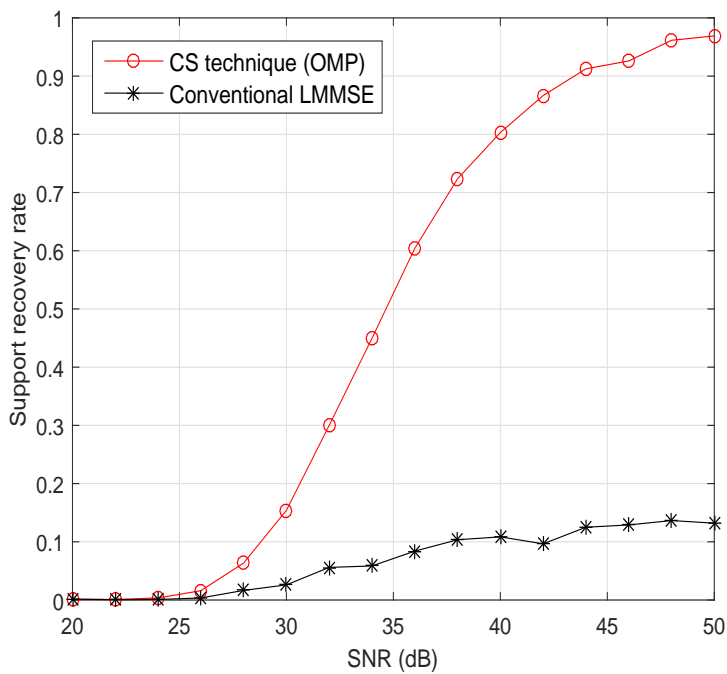}
\caption{Recovery performance of the CS technique in comparison with the conventional LMMSE method for mmWave communications.}
\label{fig:mmwave_sim}
\end{center}
\end{figure}

    Fig.~\ref{fig:mmwave_sim} shows the performance of the CS technique and the conventional LMMSE method.  We assume that the transmitter and the receiver use the beamforming and combining vectors steered to equally spaced directions with  $T=Q=16$.  We consider mmWave channels with three multi-path components (i.e., $k=3$) and both the transmitter and the receiver have 16 antennas. The resolutions for AoA and AoD are set to $L_r=32$ and $L_t=32$, respectively. In this setup, the size of the system matrix $\mathbf{P}$ is  $256 \times 1024$. With the LMMSE method, we find the support by picking the largest $k$ absolute values of the LMMSE estimate of $\mathbf{s}$. While the conventional LMMSE does not work properly, the CS technique performs accurate reconstruction of the support by exploiting the sparsity structure of the angular-domain channel.

\subsection{Sparse Detection}
\label{sec:non_sparse_det}

\begin{figure}[t]
\begin{center}
\subfigure[Block diagram]{\label{fig:det1}
\psfig{file=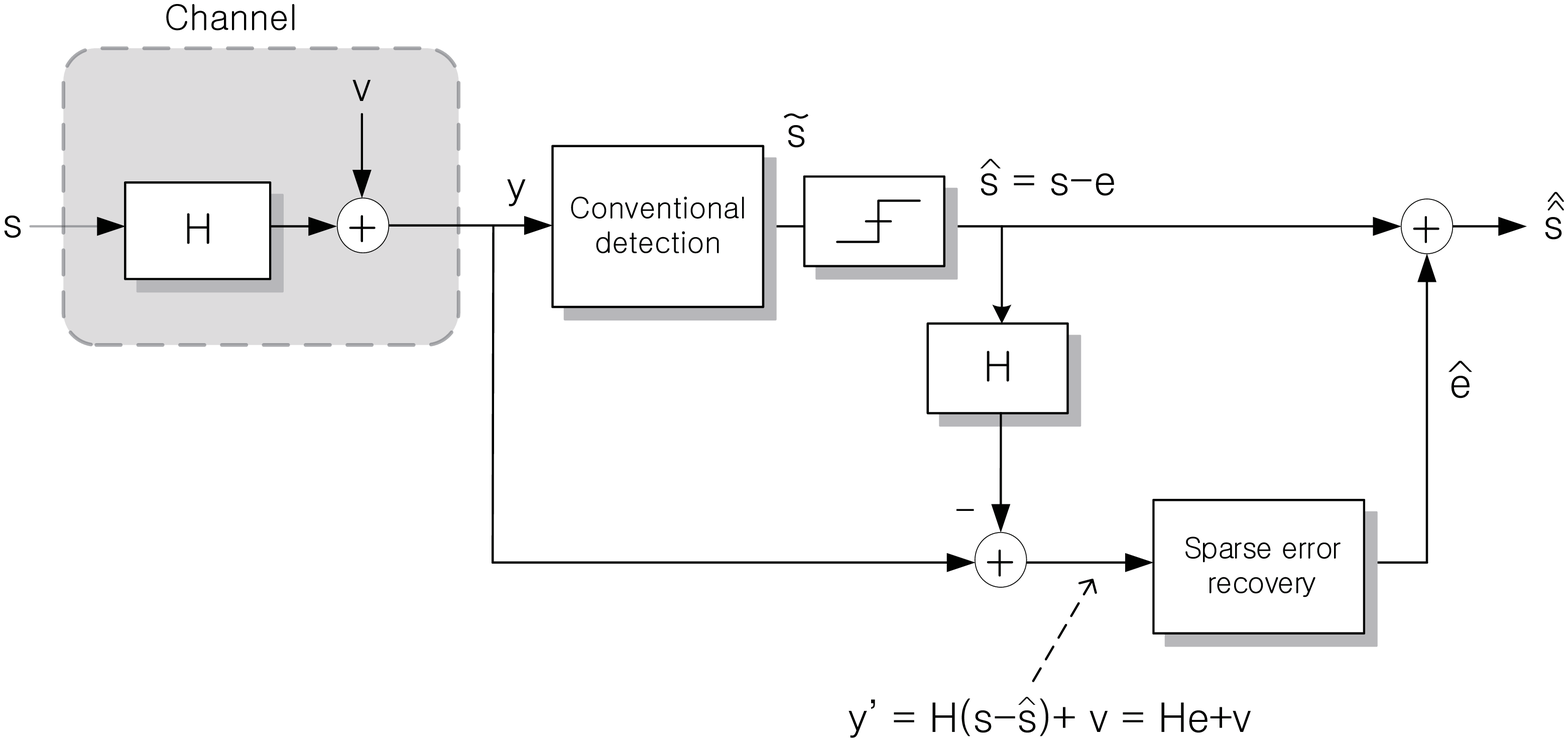,width=110mm,height=65mm} }\\\subfigure[SER performance]{\label{fig:det2}
\psfig{file=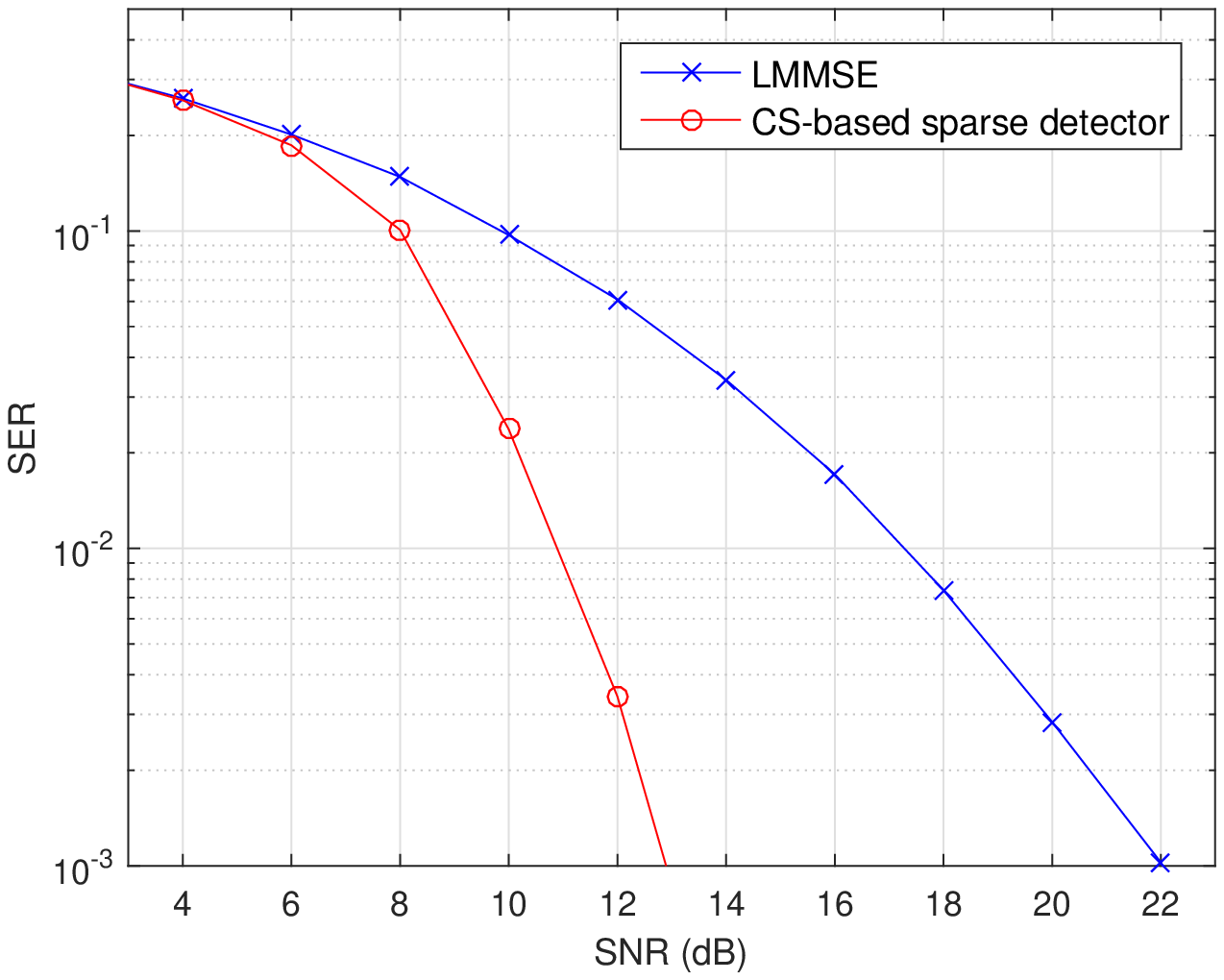,width=100mm, height=85mm} }\\
\caption{Sparse detection using the CS technique: (a) block diagram of CS-based sparse error detection (SED) technique and (b) symbol error rate (SER) performance.} \label{fig:sparse_det}
\end{center}
\end{figure}

    Even in the case where the transmit symbol vector is non-sparse, we can still use CS techniques to improve the performance of the symbol detection. There are a number of applications where the transmit vector cannot be modeled as a sparse signal.
    Even in this case, by the deliberate combination of conventional linear detection and sparse recovery algorithm, one can improve the detection performance \cite{globe14}. In this scheme, conventional linear detection such as LMMSE is performed initially to generate a rough estimate (denoted by $\tilde{\mathbf{s}}$ in Fig. \ref{fig:sparse_det}) of the transmit symbol vector. Since the quality of detected (sliced) symbol vector $\hat{\mathbf{s}}$ is generally acceptable in the operating regime of the target systems, the error vector $\mathbf{e} = \mathbf{s} - \hat{\mathbf{s}}$ after the detection would be readily modeled as a sparse signal.
    %
    %
    Now, by a simple transform of this error vector, one can obtain the new measurement vector $\mathbf{y}'$ whose input is the sparse error vector $\mathbf{e}$. This task is accomplished by the retransmission of the detected symbol $\hat{\mathbf{s}}$ followed by the subtraction (see Fig. \ref{fig:sparse_det} (a)). As a result, newly obtained received vector $\mathbf{y}'$ is expressed as $\mathbf{y}' = \mathbf{y} - \mathbf{H}  \hat{\mathbf{s}} = \mathbf{H} \mathbf{e} + \mathbf{v}$, where $\hat{\mathbf{s}}$ is the estimate of $\mathbf{s}$ obtained by the conventional detector.
    In estimating the error vector $\mathbf{e}$ from $\mathbf{y}'$, a sparse recovery algorithm can be employed. By adding the estimate $\hat{\mathbf{e}}$ of the error vector to the sliced symbol vector $\hat{\mathbf{s}}$, more reliable estimate of the transmit vector $\hat{\hat{\mathbf{s}}} = \hat{\mathbf{s}} + \hat{\mathbf{e}} = \mathbf{s} + (\hat{\mathbf{e}} - \mathbf{e})$ can be obtained.
    The performance of CS-based sparse detection scheme is plotted  in Fig. \ref{fig:sparse_det} (b).

%
%

%
%
%

%
%
%
\section{Issues to Be Considered When Applying CS techniques to Wireless Communication Systems}
\label{sec:issues}
%
%
%
%
As more things should be considered in the design of wireless communication systems, such as wireless channel environments, system configurations (bandwidth, power, number of antennas), and design requirements (computational complexity, peak data rate, latency), the solution becomes more challenging, ambitious, and complicated.
As a result, applying the CS techniques to wireless applications is not any more copy-and-paste type task and one should have good knowledge on fundamental issues.
Some of the questions that wireless researchers can come across when they design a CS-based technique are listed as follows:
\begin{itemize}
\item Is sparsity important for applying CS technique? If yes, then what is the desired sparsity level?
\item How can we convert non-sparse vector into sparse one? Should we know the sparsity a priori?
\item What is the desired property of the system (sensing) matrix?
\item What kind of recovery algorithms are there and what are pros and cons of these?
\item What should we do if  more than one observations are available?
\item Can we do better if the input vector consists of finite alphabet symbols?
\end{itemize}
In this section, we address these issues in a way of answering to these questions. In each issue, we provide useful tips and tricks for the successful development of CS techniques for wireless communication systems.

\subsection{Is Sparsity Important?}
\label{sec:important}
%
If you have an application that you think CS-based technique might be useful, then the first thing to check is whether the signal vector to be recovered is sparse or not.
Many natural signals, such as image, sound, or seismic data are in themselves sparse or can be sparsely represented in a properly chosen basis.
Even though the signal is not strictly sparse, often it can be well approximated as a sparse signal.
For example, as mentioned, most of wireless channels exhibit power-law decaying behavior due to the physical phenomena of waves (e.g., reflection, diffraction, and scattering) so that the received signal is expressed as a superposition of multiple attenuated and delayed copies of the original signal. Since a few of delayed copies contain most of the energy, a vector representing the channel impulse response can be readily modeled as a sparse vector. Obviously, stronger sparsity (i.e. lower $k$) leads to better recovery performance for CS techniques as compared to the conventional estimator (see Fig. \ref{fig:perf_sparsity}).
Regarding the sparsity, an important question that one might ask for is what level of sparsity is enough to apply the CS techniques? Put it alternatively, what is the desired dimension of the observation vector when the sparsity $k$ is given? Although there is no clean-cut boundary on the measurement size under which CS-based techniques do not work properly,\footnote{In fact, this is connected to many parameters such as dimension of vector and quality of system matrix.} it has been shown that one can recover the original signals using $m = O ( k \log (\frac{n}{k} ))$ measurements via many of state-of-the-art sparse recovery algorithms.
Since the logarithmic term can be approximated to a constant, one can set $m = \epsilon k$ as a starting point (e.g., $\epsilon = 4$ by four-to-one practical rule \cite{cs_magazine}). This essentially implies that measurement size is a linear function of $k$ and unrelated to $n$. Note, however, that if the measurement size is too small and comparable to the sparsity (e.g., $m < 2k$ in Fig. \ref{fig:perf_sparsity}), performance of the sparse recovery algorithms might not be appealing.

\begin{figure}[t]
\begin{center}
	\includegraphics[width=110mm,height=95mm]{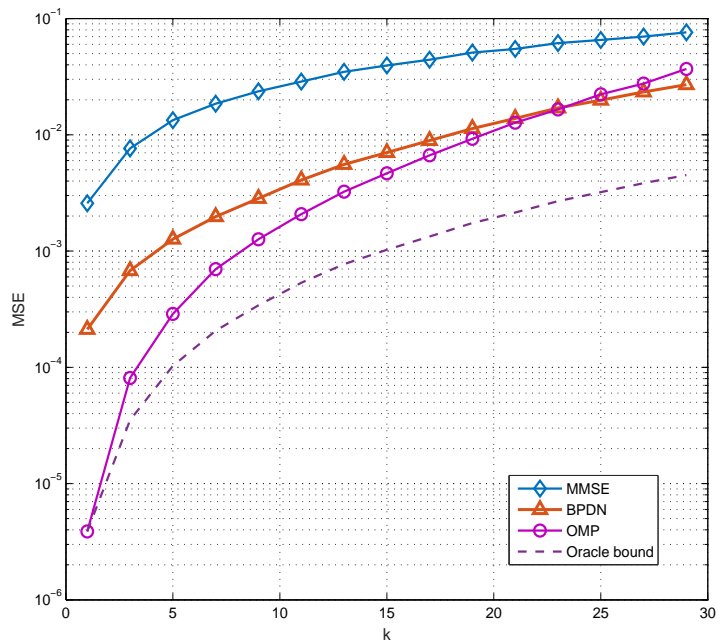} 
\caption{Recovery performance as a function of sparsity. Elements of the sensing matrix is generated by the Gaussian random variable. We can clearly see that the CS techniques outperform the conventional MMSE technique by a large margin ($m=100$, $n=256$, SNR$=20$ dB).}
\label{fig:perf_sparsity}
\end{center}
\end{figure}
In summary, if the desired vector is sparse, then it is natural to consider the CS technique. As a rule of thumb, if the sparsity $k$ satisfies $m=4k$, one can try the CS technique.  We will discuss in the next subsection that even though the target vector is non-sparse, we can still use the CS technique by finding the basis over which the target vector can be sparsely represented.

\subsection{Predefined Basis or Learned Dictionary?}
\label{sec:important}
%
As discussed, to use CS techniques in wireless communication applications, we should ensure that the target signal has a sparse representation. Traditional CS algorithm is performed when the signal can be sparsely represented in an orthonormal basis, and many robust recovery theories are based on this assumption \cite{cs_magazine}. Although such assumption is valid in many applications, there are still plenty of scenarios where the target signal may not be sparsely represented in an orthonomal basis, but in an overcomplete dictionary \cite{candes2011compressed}. 
Overcomplete dictionary refers to a dictionary having more columns than rows. Since such dictionary is usually unknown beforehand, it should be learned from a set of training data.
This task, known as {\it dictionary learning} \cite{kreutz2003dictionary, aharon2006img}, is to learn an overcomplete dictionary
$\mathbf{D} \in\mathbb{C}^{n \times m}$ ($n<m$) from a set of signals $\mathbf{x}_i$
%
%
such that $\mathbf{x}_i$ can be approximated as $\mathbf{x}_i \approx \mathbf{D}\mathbf{s}_i$, where $\mathbf{s}_i \in\mathbb{C}^{m\times 1}$ and $\|\mathbf{s}_i\|_0\ll m$. Specifically, by using a training set $\mathbf{X}$ which contains $L$ realizations of $\mathbf{x}$, i.e., $\mathbf{X} = [\mathbf{x}_1,\mathbf{x}_2,\ldots,\mathbf{x}_L]$, we solve the following optimization problem
%
\begin{equation}\label{equ:dictionary learning}
\min_{\mathbf{D},\mathbf{s}_1,\ldots,\mathbf{s}_L}\lambda\|\mathbf{X}-\mathbf{DS}\|^2_F+\sum\limits_{i=1}^L\|\mathbf{s}_i\|_0
\end{equation}
%
where $\mathbf{S}=[\mathbf{s}_1,\mathbf{s}_2,\ldots,\mathbf{s}_L]$ is the matrix formed from all sparse coefficients satisfying $\mathbf{x}_k\approx \mathbf{D}\mathbf{s}_k$. Note that $\lambda$ is the parameter that trades off the data fitting error $\|\mathbf{X}-\mathbf{DS}\|^2_F$ and sparsity of the representation $\sum\limits_{i=1}^L\|\mathbf{s}_i\|_0$.
As a result of this, we can express the system model $\mathbf{y}=\mathbf{A}\mathbf{x} + \mathbf{n}$ into $\mathbf{y} \approx \mathbf{A} \mathbf{D} \mathbf{s} + \mathbf{n} = \mathbf{H} \mathbf{s} + \mathbf{n} $ where $\mathbf{H} = \mathbf{A} \mathbf{D}$.
After obtaining $\hat{\mathbf{s}}$ from the CS technique, we generate the estimated signal \(\hat{\mathbf{x}}=\mathbf{D}\hat{\mathbf{s}}\).
%

%
%
As an example to show the benefit of the dictionary learning, we consider the downlink channel estimation of the massive MIMO systems.
Consider the pilot-aided channel estimation where the basestation sends out pilots symbols \(\textbf{A}\in\mathbb{C}^{T\times n}\) during the training period $T$.
%
%
In the multiple-input-single-output (MISO) scenario where the basestation has $n$ antennas and the mobile user has a single antenna, the received vector is $\textbf{y}=\textbf{Ah}+\textbf{n}$ where $\textbf{h}$ is the channel vector.
Traditional channel estimation schemes such as LS or MMSE estimation require more than or at least equal to $n$ measurements to reliably estimate the channel. In the massive MIMO regime where $n$ is in the order of hundred or more, this approach might not be practical since it consumes too much downlink resources. From our discussion, if \(\textbf{h}\) is sparse in some basis or dictionary \(\textbf{D}\) (i.e., \(\textbf{h}\approx \textbf{Ds}\), \(\|\textbf{s}\|_0\ll n\)), then with the knowledge of \(\textbf{A}\) and \(\textbf{D}\), \(\textbf{s}\) can be recovered from \(\textbf{y}=\textbf{ADs}+\textbf{n}\) using the CS technique, and subsequently the channel \(\textbf{h}\) is estimated as \(\hat{\textbf{h}}=\textbf{D}\hat{\textbf{s}}\).
Since the training period proportional to the sparsity of $\mathbf{s}$ ($T\propto \|\textbf{s}\|_0$) is enough, channel estimation using the CS technique is effective in reducing the pilot training overhead.

One can easily see that key issue in this process is to find $\textbf{D}$ such that \(\textbf{h}\) can be sparsely represented, that is, to express $\textbf{h}$ as \(\textbf{h} \approx \textbf{Ds}\) where $\textbf{s}$ is the sparse vector.
A commonly used basis  is the DFT basis which is derived from a uniform linear array deployed at the basestation \cite{bajwa2010compressed, el2014spatially}. However, the sparsity assumption under the orthogonal DFT basis is valid only when the scatters in the environment are extremely limited (e.g., a point scatter) and the number of antennas at the basestation goes to infinity \cite{sayeed2002deconstructing}, which is not applicable in many cases. %
Fig. \ref{fig:MSE_bar} depicts the model mismatch error \(\sum_{i=1}^L\|\textbf{h}_i-\textbf{D}\textbf{s}_i\|_2^2/L\) as a function of the number of atoms in \(\textbf{D}\) being used.
%
For each $k$, we set the constraint $\|\textbf{s}_i\|_0\le k$ for all $i$ and then compare three types of \(\textbf{D}\): orthogonal DFT basis, overcompleted DFT dictionary and overcompleted learned dictionary.
We observe that an approach using overcomplete dictionary achieves much smaller model mismatch error than the orthogonal basis, while the learned dictionary is even better than overcomplete DFT dictionary (see also \cite{ding2015channel}).
In this figure, we also observe that even with relatively less sparsity (e.g., $\|\textbf{s}_i\|_0 = 30$), the model mismatch error of the learned dictionary is pretty small \(\|\textbf{h}_i-\textbf{D}\textbf{s}_i\|_2^2 < 10^{-3}\). 
This example clearly demonstrates that the essential degree of freedom of the channel is much smaller than the dimension of the channel ($k=30 \ll n=100$).

\begin{figure}[!t]
\begin{center}
	\includegraphics[width=110mm,height=95mm]{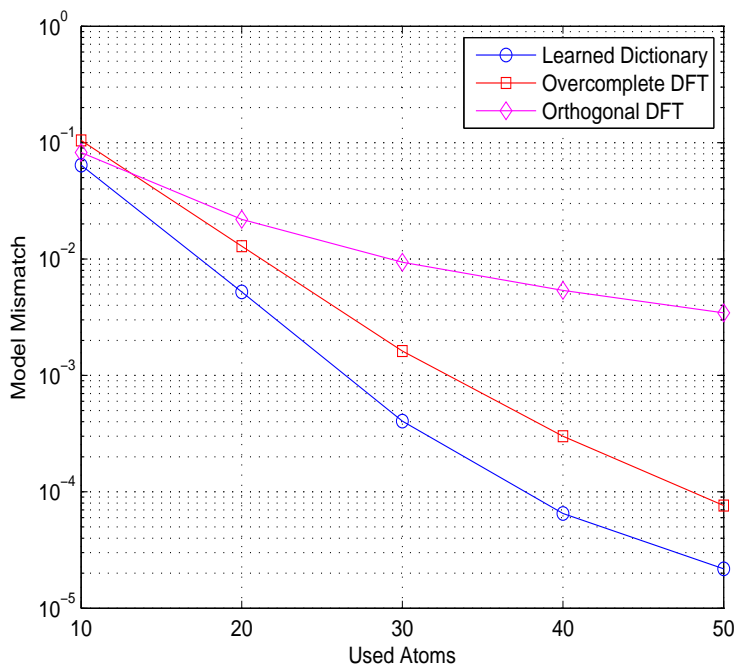}
\caption{MSE comparison of overcomplete learned dictionary, overcomplete DFT dictionary and orthogonal DFT basis. The channel \(\textbf{h}_i\) is generated using 3GPP spatial channel model (SCM) \cite{3gpp.25.996}, each \(\textbf{h}_i\) consists of 6 scatter clusters (3 far scatter and 3 local scatter).}
\label{fig:MSE_bar}
\end{center}
\end{figure}
%
The main point in this subsection is that even though the vector is non-sparse, one can sparsify the desired vector using a technique such as dictionary learning. Using simple dictionary like DFT matrix would be handy but finding an elaborate dictionary is a bit challenging task and also interesting open research problem.
In general, computational cost associated with the learning in (\ref{equ:dictionary learning}) will not be trivial. However, this would not be a serious problem since this task is done in offline in most cases.

\subsection{What is the Desired Property for System Matrix?}
\label{sec:property}
From the discussion above, one might think that the desired signal can be recovered accurately as long as the original signal vector is sparse.
Unfortunately, this is not always true since
the accurate recovery of the sparse vector would not be possible when a poorly designed system matrix is used.
For example, suppose the support $\Omega$ of $\mathbf{s}$ is $\Omega = \{1, 3\}$ and the first and third columns of $\mathbf{H}$ are the same, then by no means the recovery algorithm will work properly. This also motivates that the columns in $\mathbf{H}$ should be designed to be as orthogonal to each other as possible. 
Intuitively, the more the system matrix preserves the energy of the original signals, the better the quality of the recovered signal would be. The system matrices supporting this idea need to be designed such that each element of the measurement vector contains similar amount of information on the input vector $\mathbf{s}$.
That is the place where the random matrix comes into play. Although an exact quantification of the system matrix is complicated (see also next subsection), good news is that most of random matrices, such as Gaussian ($\mathbf{H}_{i,j} \sim N(0, \frac{1}{m})$) or Bernoulli with equal probability ($\mathbf{H}_{i,j} = \pm\frac{1}{m}$), well preserve the energy of the original sparse signal.

When the CS technique is applied to the wireless communications, the system matrix $\mathbf{H}$ can be determined by the process of generating the transmit signal and/or wireless channel characteristics.
%
%
Fortunately, many of system matrices in wireless communication systems
behave like a random matrix.
Similarly, the system matrix is modeled by a Bernoulli random matrix when the channel estimation is performed for code division multiplexing access (CDMA) systems. Fading channel is often modeled as Gaussian random variables so that the channel matrix whose columns correspond to the channel vectors between mobile terminal and the base-station can be well modeled as a Gaussian random matrix.
%
%

While the system matrix of many wireless applications works well in many cases, in some case we can also design the system matrix to improve the reconstruction quality. This task, called {\it sensing matrix design}, is classified into two approaches.
In the first approach, we assume that the desired signal \(\mathbf{x}\) is sparsely represented in a dictionary \(\mathbf{D}\). Then, the system model is expressed as $\mathbf{y} = \mathbf{Hx} = \mathbf{HDs}$.  In this setup, the goal is to design $\mathbf{H}$ adapting to dictionary \(\mathbf{D}\) such that columns in the combined dictionary $\mathbf{E} = \mathbf{HD}$ has good geometric properties \cite{elad2007optimized, duarte2009learning, yu2011measurement}. In other words, we design $\mathbf{H}$ such that the columns in \(\mathbf{E}\) are as orthogonal to each other as possible.
In the second type, rows of \(\mathbf{H}\) are \emph{sequentially} designed using previously collected measurements as guidance \cite{malloy2014near, haupt2012sequentially}. The main idea of this approach is to estimate the support from previous measurements and then allocate the sensing energy to the estimated support element.
%
Recently, the system design strategies to generate a nice structure of system matrix in terms of recovery performance for massive MIMO systems were proposed \cite{qi,chan_choi}.

In summary, in order to make the CS technique effective, columns of the system matrix should be as uncorrelated as possible. Simple and effective way to do so is to design the elements of the system matrix as random as possible.
One can check this by computing the mutual coherence $\mu (\mathbf{H})$ or average coherence (taking average of all inner product instead of finding maximum). Also, one can learn from the bound in Section II.D ($\mu (\mathbf{H}) \geq \frac{1}{\sqrt{m}}$) that the coherence of the system matrix improves (decreases) with the dimension and so would be the resulting recovery performance.
Also,  similar to the dictionary learning, we can design (learn) the system matrix to achieve better recovery performance. This is interesting research topic since the recovery performance depends strongly on the system matrix, simple and efficient design strategy would be beneficial to improve the performance.

\subsection{What Recovery Algorithm Should We Use?}
\label{sec:recoveryalg}
When the researchers consider the CS techniques in their applications, they can be confused by a plethora of algorithms. There are hundreds of sparse recovery algorithms in the literatures, and still many new ones are proposed every year. The tip for not being flustered in a pile of algorithms is to clarify the main issues like the target specifications (performance requirements and complexity budget), system environments (quality of system matrix, operating SNR regime), dimension of measurements and signal vectors, and also availability of the extra information.
Perhaps two most important issues in the design of CS-based wireless communication systems are the mapping of the wireless communication problem into an appropriate CS problem and the identification of the right recovery algorithm.
Often, one should modify the algorithm to meet the system requirements.
Obviously, identifying the best algorithm for the target application is by no means straightforward and one should have basic knowledge of the sparse recovery algorithm.
In this subsection, we provide a brief overview on four major approaches: {\it $\ell_1$-norm minimization, greedy algorithm, iterative algorithm}, and {\it statistical sparse recovery technique}. Although not all sparse recovery algorithms can be grouped into these categories, these four are important in various standpoints such as popularity, effectiveness, and historical value.

\begin{itemize}

\item {\bf Convex optimization approach ($\ell_1$-norm minimization)}:
As mentioned, with the knowledge of the signal $\mathbf{s}$ being sparse, the most natural way is to find a sparse input vector
under the system constraint ($\arg\min \| \mathbf{s} \|_0 ~\mbox{s.t.}~ \mathbf{y} = \mathbf{H s}$).
%
%
Since the objective function $\| \mathbf{s} \|_0$ is non-convex, solution of this problem can be found in a combinatoric way.
%
%
As an approach to overcome the computational bottleneck of $\ell_0$-norm minimization, $\ell_1$-norm minimization has been used. If the noise power is bounded to $\epsilon$, $\ell_1$-minimization problem is expressed as
$$\arg\min \| \mathbf{s} \|_1 ~ \mbox{s.t.} ~ \| \mathbf{y} - \mathbf{H s} \|_2 < \epsilon.$$
%
%
%
%

Basis pursuit de-noising (BPDN) \cite{bpdn}, also called Lasso \cite{lasso}, relaxes the hard constraint on the reconstruction error by introducing a soft weight $\lambda$ as
\begin{eqnarray}
  \hat{\mathbf{s}} = \arg \min_{\mathbf{s}} \| \mathbf{y} - \mathbf{H s} \|_2 + \lambda \| \mathbf{s} \|_1.
  \label{eq:bpdn}
\end{eqnarray}
The recovery performance of such $\ell_1$-minimization method can be further enhanced by solving a sequence of weighted  $\ell_1$ optimization \cite{reweighted_candes}.
Although the $\ell_1$-minimization problem is convex optimization problem and thus efficient solvers exist,
%
computational complexity of this approach is still burdensome in implementing real-time wireless communication systems.

\item {\bf Greedy algorithm}: In principle,  main principle of the greedy algorithm is to successively identify the subset of support (index set of nonzero entries) and refine them until a good estimate of the support is found. Suppose the support is found accurately, then the estimation of support elements would be straightforward since one can convert the underdetermined system into overdetermined one by removing columns corresponding to the zero element in $\mathbf{s}$ and then use a conventional estimation scheme like MMSE or least squares (LS) estimator.
    In many cases, greedy algorithm attempts to find the support in an iterative fashion, obtaining a sequence of estimates $(\hat{\mathbf{s}}_1 , \cdots , \hat{\mathbf{s}}_n)$.
    While the OMP algorithm picks an index of column of $\mathbf{H}$ one at a time using a greedy strategy \cite{omp}, recently proposed variants of OMP, such as generalized OMP (gOMP) \cite{gomp}, compressive sampling matching pursuit (CoSaMP) \cite{cosamp}, subspace pursuit (SP) \cite{sp}, and multipath matching pursuit (MMP) \cite{kwon}, have the refined step to improve the recovery performance. For example, gOMP selects multiple promising columns in each iteration. CoSaMp \cite{cosamp} and SP \cite{sp} incorporate special procedures to refine the set of column indices by 1) choosing more than $k$ columns of $\mathbf{H}$, 2) recovering the signal coefficients based on the projection onto the space of the selected columns, and 3) rejecting those might not be in the true support. MMP performs the tree search and then find the best candidate among multiple promising candidates obtained from the tree search.
    In general, these approaches outperform the OMP algorithm at the cost of higher computational complexity.
   In summary, the greedy algorithm has computational advantage over the convex optimization approach while achieving comparable (sometimes better) performance.


\item {\bf Iterative algorithm}:    Sparse solution can be found by refining  the sparse signal estimate in an iterative fashion.  This approach called iterative hard thresholding (IHT) \cite{iht,maleki} performs the following update step iteratively
    \begin{align} \label{eq:iht}
    \hat{\mathbf{s}}^{(i+1)} = T\left(\hat{\mathbf{s}}^{(i)} + \mathbf{H}^{H}(\mathbf{y} - \mathbf{H} \hat{\mathbf{s}}^{(i)} \right),
    \end{align}
    %
    where $\hat{\mathbf{s}}^{(i)}$ is the estimate of the signal vector $\mathbf{s}$ at the $i$th iteration and $T(\cdot)$ is the thresholding operator.
     %
    Algorithms based iterative thresholding  yet exhibiting improved performance have been proposed  in \cite{amp1,amp2}.

  \item {\bf Statistical sparse recovery}: Statistical sparse recovery algorithms treat the signal vector $\mathbf{s}$ as a random vector and then infer it using the Bayesian framework. In the maximum-a-posteriori (MAP) approach, for example, an estimate of $\mathbf{s}$ is expressed as
      $$\hat{\mathbf{s}} = \arg \max_{\mathbf{s}} \ln f(\mathbf{s}|\mathbf{y}) = \arg \max_{\mathbf{s}}\ln f(\mathbf{y}|\mathbf{s}) + \ln f(\mathbf{s}),$$
      where $f(\mathbf{s})$ is the prior distribution of $\mathbf{s}$.
      To model the sparsity nature of the signal vector $\mathbf{s}$, $f(\mathbf{s})$ is designed in such a way that it decreases with the magnitude of $\mathbf{s}$. Well-known examples include i.i.d. Gaussian and Laplacian distribution. For example, if i.i.d. Laplacian distribution is used, then the prior distribution $f(\mathbf{s})$ is expressed as
      $$f(\mathbf{s}) = \left(\frac{\lambda}{2}\right)^n \exp\left(-\lambda \sum_{i=1}^{n} |s_i| \right).$$
      Note that the MAP-based approach with the Laplacian prior model leads to the algorithm similar to the BPDN in \eqref{eq:bpdn}. When one chooses other super-Gaussian priors, the model reduces to a regularized least squares problem \cite{gorodnitsky1997sparse, candes2008enhancing, chartrand2008iteratively}, which can be solved by a sequence of reweighted  $\ell_1$ or $\ell_2$ algorithms.
      %
      %
%
%
\begin{table*}
\begin{center}
\caption{Summary of sparse recovery algorithms} \label{tb:smv} \small
\begin{tabular}{ |p{2cm}|p{1.8cm}|p{6cm}|p{5cm}| }
\hline
\hline
 Approach & Algorithm & Features & Computational Complexity \\ \hline
\multirow{2}{3cm}{Convex optimization}
   & BPDN \cite{bpdn} & Reconstruction error $\|\mathbf{y}-\mathbf{H}\mathbf{s}\|_2$ regularized with $\ell_1$ norm $\|\mathbf{s}\|_1$ is minimized.
   & Complexity is  $\mathcal{O}(m^2n^3)$. Recently, the speed of optimization for the BPDN has been substantially improved   \cite{boyd}.
   \\ \cline{2-4}
    & Reweighted $\ell_1$ minimization\cite{reweighted_candes} &  The BPDN can be improved via iterative reweighted $\ell_1$-minimization.
    & Complexity is  $\mathcal{O}(m^2n^3\cdot \mbox{iter})$. Computational complexity of this approach is higher than the BPDN.
    \\ \hline
\multirow{3}{3cm}{Greedy \\ algorithm}
   & OMP \cite{omp}, gOMP \cite{gomp} & The indices of nonzero elements of $\mathbf{s}$ are identified in an iterative fashion.
    & Complexity is $O(mn
    k)$.  Complexity is low for small $k$.
    \\ \cline{2-4}
   & CoSaMp \cite{cosamp}, SP \cite{sp} & More than $k$ indices of the nonzero elements of $\mathbf{s}$ are found and then candidates of poor quality are pruned afterwards.
   & Complexity is $O(mn\cdot \mbox{iter})$. It requires higher complexity than the OMP.
   \\ \cline{2-4}
      & MMP \cite{kwon} & Tree search algorithm is adopted to search for the indices of the nonzero elements in $\mathbf{s}$ efficiently.
   & The complexity is higher than that of the OMP.  The tree-based search offers the trade-off between performance and complexity. \\ \hline
   \multirow{2}{3cm}{Iterative \\ algorithm}
   & IHT \cite{iht}  &  Iterative thresholding step in (\ref{eq:iht}) is performed repeatedly. It works well under limited favorable scenarios.
   &
   Complexity is $\mathcal{O}(mn \cdot \mbox{iter})$.  Implementation cost is low.
    \\ \cline{2-4}
   & AMP \cite{amp2}  & Gaussian approximations in message passing are used to derive the algorithm.  
    & Complexity is comparable to that of the IHT.
    \\ \hline
   \multirow{2}{3cm}{Statistical \\ sparse \\ recovery}
   & MAP with Laplacian prior \cite{jeffrey}   &  MAP estimation of the sparse vector is derived using Laplacian distribution as sparsity-promoting prior distribution. &
   Complexity is comparable to that of the BPDN.   \\ \cline{2-4}
   & SBL \cite{wipf2004sparse}, BCS \cite{bcs}  & Hyper-parameter is used to model the sparsity of the sparse signals. The EM algorithm is used to find the hyper-parameter and signal vector iteratively.
   & Complexity is $\mathcal{O}(n^3)$. Computational complexity can be high for large size problems due to matrix inversion operation.
    \\
   \hline
   \end{tabular}
\end{center}
\end{table*}
%

%
%
  Different type of statistical sparse recovery algorithms are sparse Bayesian learning (SBL)  \cite{wipf2007empirical} and Bayesian compressed sensing \cite{bcs}. In these approaches, the priori distribution of the signal vector $\mathbf{s}$ is modeled as zero-mean Gaussian with the variance parameterized by a hyper-parameter. In SBL, for example, it is assumed that each element of \(\mathbf{s}\) is a zero-mean Gaussian random variable with variance \(\gamma_k\) (i.e., $s_k \sim \mathcal{N}(0,\gamma_k)$).
  %
  %
  A suitable prior on the variance $\gamma_k$ allows the modeling of several super-Gaussian densities. Often a non-informative prior is used and found to be effective. Let \(\boldsymbol{\gamma}=\{\gamma_k, \forall k\}\), then the hyperparameters \(\Theta=\{\boldsymbol{\gamma}, \sigma^2\}\) which control the distribution of \(\mathbf{s}\) and \(\mathbf{y}\) can be estimated from data by marginalizing over \(\mathbf{s}\) and then performing evidence maximization or Type-\RN{2} maximum-likelihood estimation \cite{tipping2001sparse}:
      \begin{equation}\label{eq: sbl ml}
      \begin{aligned}
      \hat{\Theta}
      &=\arg\max_{\Theta}p(\mathbf{y};\boldsymbol{\gamma},\sigma^2) \\
      &=\arg\max_{\Theta}\int p(\mathbf{y}|\mathbf{s};\sigma^2)p(\mathbf{s};\boldsymbol{\gamma})d\mathbf{s}.
      \end{aligned}
      \end{equation}
     The signal \(\mathbf{s}\) can be inferred from the maximum-a-posterior (MAP) estimate after obtaining \(\hat{\Theta}\):
      \begin{equation}
      \mathbf{s}=\arg\max_{\mathbf{s}} p(\mathbf{s} | \mathbf{y} ; \hat{\Theta}).
      \end{equation}
      By solving (\ref{eq: sbl ml}), we obtain the solution of \(\boldsymbol{\gamma}\) with most of elements being zero. Note that $\boldsymbol{\gamma}$ controls the variance of \(\mathbf{s}\). When \(\gamma_k=0\), it implies \(s_k=0\), which results in a sparse solution. It has been shown that with appropriately chosen parameters, the SBL algorithm outperforms $\ell_1$ minimization and iteratively reweighted algorithms \cite{wipf2004sparse}.

\end{itemize}

In Table \ref{tb:smv}, we summarize the key features of the sparse recovery algorithms and  briefly comment on their computational complexity.

 \begin{figure} [t]
 \centering
    \subfigure[]
  {\epsfig{figure=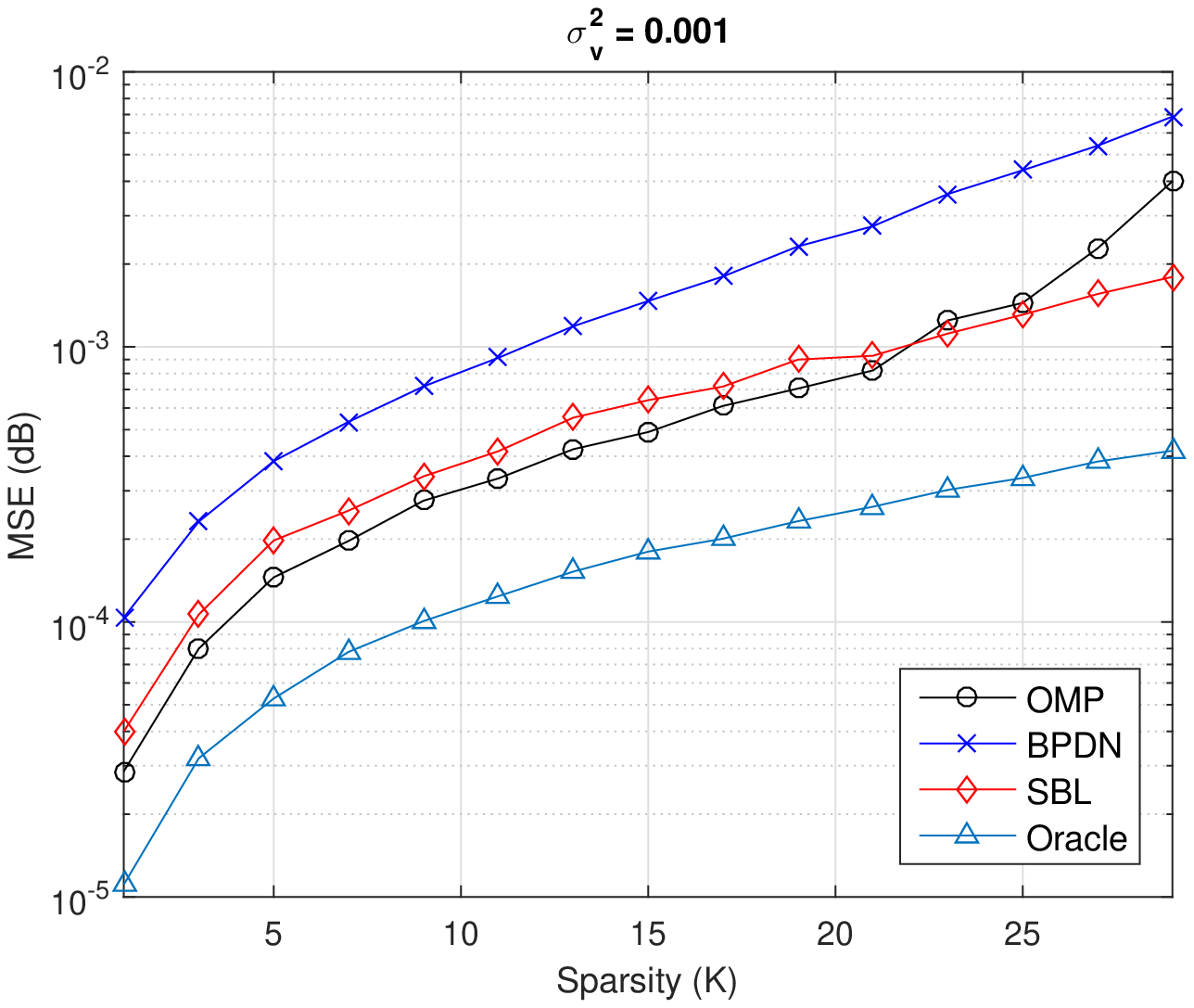, width=80mm,height=65mm}}
   \subfigure[]
  {\epsfig{figure=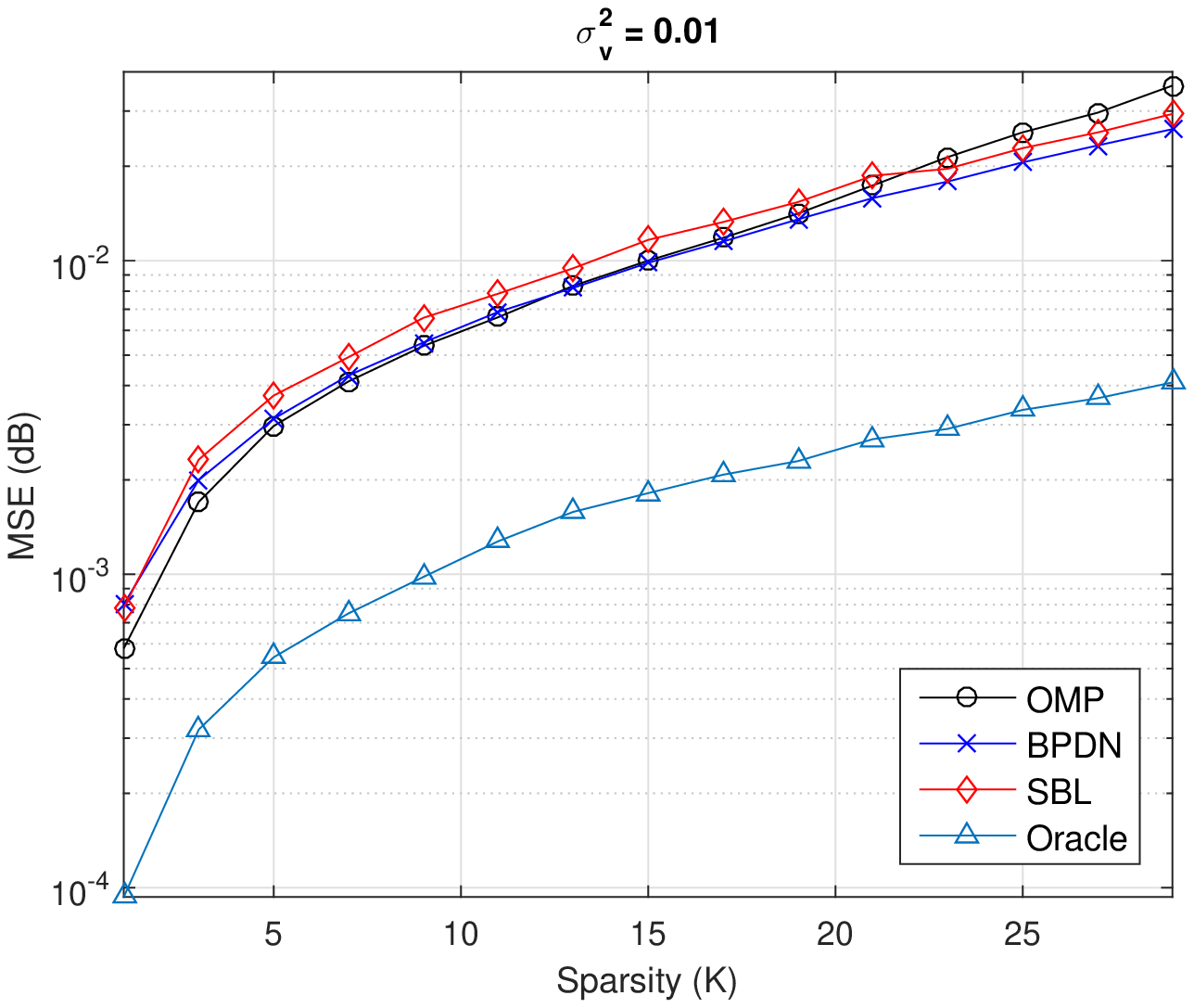, width=80mm,height=65mm}}
   \subfigure[]
  {\epsfig{figure=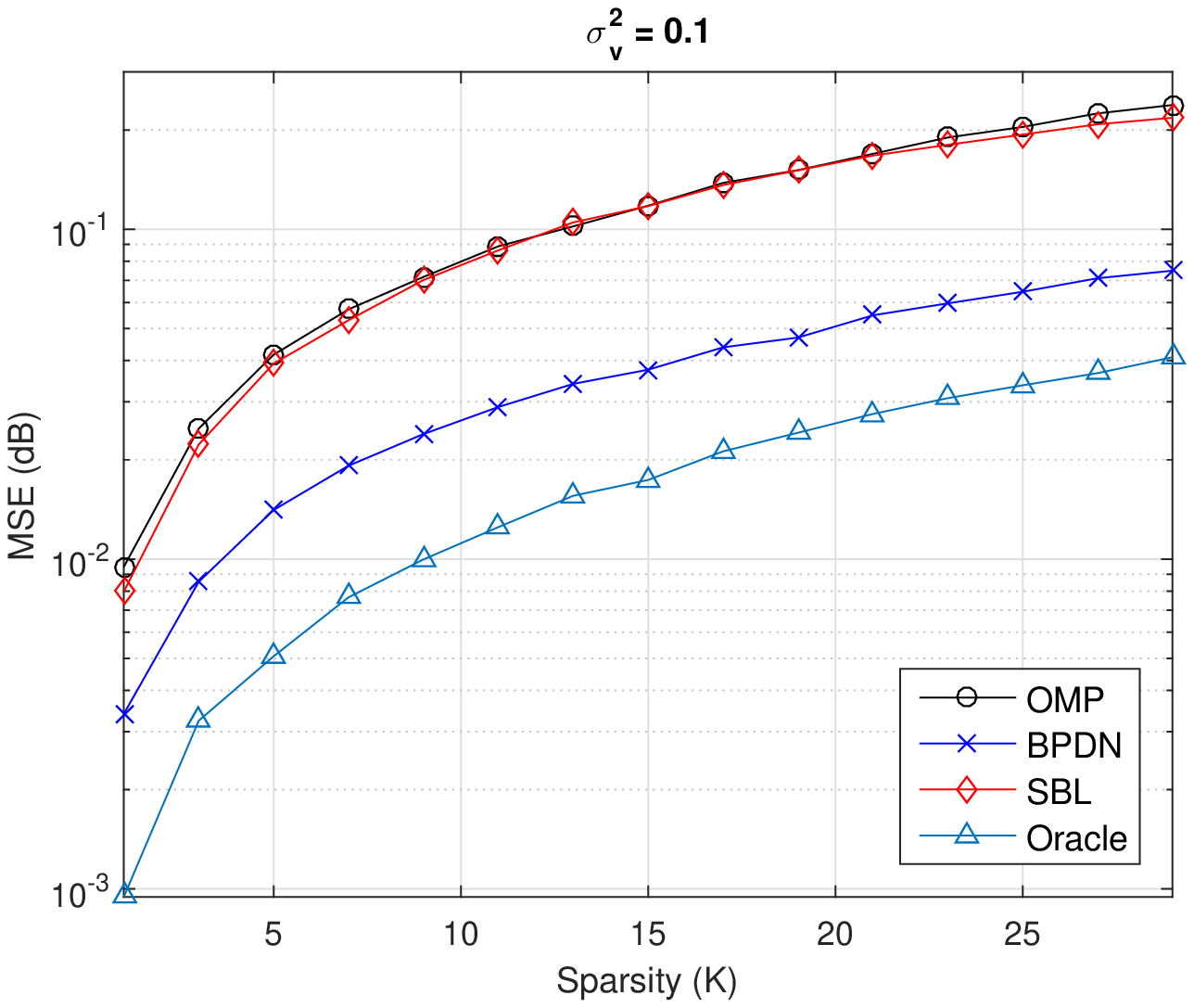, width=80mm,height=65mm}}
  \caption { Performance comparison of several recovery algorithms  (a) in high SNR ($\sigma_v^2$ = 0.01), (b) in mid SNR ($\sigma_v^2$ = 0.01), and (c) in low SNR ($\sigma_v^2$ = 0.1).
   } \label{fig:perf}
\end{figure}

To sum up, to choose the right algorithm is very important but it is not easy to find the algorithm fulfilling the designer's expectation among a variety of algorithms.
The performance and complexity depend on various parameters like problem size $m$ and $n$, sparsity level $k$, quality of sensing matrix, and noise intensity. In order to provide simple idea, we present the performance of three
well-known recovery algorithms BPDN, OMP, and SBL for different setups.  The sensing matrix $\mathbf{H}$ is generated from $N(0,1/n)$ and the nonzero elements of the signal vector $\mathbf{s}$ are from i.i.d. Gaussian distribution $N(0,1)$.
We fix $m=100$ and $n=256$ and vary the sparsity level $k$ and the noise variance $\sigma_v^2$. As a reference, we also plot the Oracle estimation where the signal recovery is performed
using the perfect knowledge of the support. Fig.~\ref{fig:perf} shows the MSE performance as a function of sparsity level $k$ for several recovery algorithms. We consider different levels of noise variance $\sigma_v^2 = 0.001, 0.01$ and $0.1$ in Fig.~\ref{fig:perf} (a), (b) and (c), respectively.  When the noise level is small (i.e., high SNR) and the sparsity level is less than 20, the OMP algorithm outperforms the BPDN and SBL. This is not a surprise because the support found by the OMP is very accurate for this scenario so that the quality of the resulting estimate is also very reliable. Note that the SBL slightly outperforms the OMP  when the sparsity level exceeds 20.  When noise variance increases to the mid-level, the performance of all three algorithms is more or less similar. Whereas, when the noise level is high, the BPDN performs significantly better than the OMP and SBL. It is clear from this observations that the performance of the recovery algorithms depends strongly on the system parameters. Performance is also sensitive to the structure  of
sensing matrix and the distribution of the signal vector. Hence, for the given setup and operating condition, one should check the empirical performance to find the right algorithm.

\subsection{Can We Do Better If Multiple Measurement Vectors Are Available?}
In many wireless communication applications, such as the wireless channel estimation problem and AoA and AoD estimation in mmWave communication systems \cite{mmwave,mmwave_mmv}, multiple snapshots (more than one observation) are available and further the nonzero positions of these vectors are invariant or varying slowly.
The problem to recover the sparse vector from multiple observations, often called multiple measurement vectors (MMV) problem, received much attention recently due to its superior performance compared to the single measurement vector (SMV) problem (see Fig.~\ref{fig:mmv}).
Group of measurements sharing common support are useful in many wireless communication applications since multiple measurements exploit correlation among sparse signal vectors and also filter out noise component and interference.
%
Wireless channel estimation is a good example since the support of the channel impulse response does not change much over time and across different MIMO antenna pairs \cite{vetterli,chan_choi,sptp,dist}.
In addition, temporal correlations between multiple source vectors (e.g., correlated fading of the channel gain) can be exploited in the algorithm design\cite{rao_temporal,rao_ar,prasad,choi_shim}.
%
%

\begin{table*}
\begin{center}
\caption{Summary of MMV-based sparse recovery algorithms} \label{tb:mmv} \small
\begin{tabular}{ |l|p{3cm}|p{9cm}| }
\hline
Scenario & References & Remark \\ \hline
\multirow{6}{*}{Scenario 1}
   & SOMP \cite{tropp2006algorithms} & Extension of OMP for the MMV setup. Computational complexity of the SOMP is lower than other candidate algorithms.   \\ \cline{2-3}
   & Convex relaxation \cite{tropp_part2} & Mixed $\ell_1$ norm is used to replace $\ell_0$ norm. The convex optimization package is used for algorithm.  \\ \cline{2-3}
   & MSBL \cite{wipf2007empirical}  & Extention of SBL for the MMV setup. It offers excellent recovery performance but the computational complexity is a bit higher.  \\ \cline{2-3}
   & MUSIC-augmented CS \cite{kiryung}  &  The subspace criterion of  MUSIC algorithm is used to identify the support.
     \\ \cline{2-3}
   & TSBL \cite{rao_temporal}   & Equivalence between block sparsity model and
    MMV model was used to exploit the correlations between the source vectors.   \\ \cline{2-3}
    & AR-SBL \cite{rao_ar}, Kalman-filtered CS \cite{kalmancs}& The multiple source vectors are modeled by auto-regressive process. The support and amplitude of the source vectors is jointly estimated via iterative algorithm.
   \\ \hline
   \multirow{3}{*}{Scenario 2}
   & KSBL \cite{prasad} & The auto-regressive process is used to model the dynamics of the source vectors. Kalman filter is incorporated to estimate the support and gains sequentially. \\ \cline{2-3}
     & AMP-MMV \cite{schniter} & Graphical model is used to describe the variations of the source vectors. Message passing over a part of graph having dense connections is handled via the AMP method \cite{amp1}.  \\ \cline{2-3}
      & sKTS \cite{choi_shim} & The deterministic binary vector is used to model the sparsity structure of the source vectors. The EM algorithm is used for joint estimation of sparsity pattern and gains. \\ \hline
     \multirow{2}{*}{Scenario 3}
     & Modified-CS \cite{modifiedcs} & The new elements added to the support detected in the previous measurement vector is found via $\ell_1$ optimization. The candidates of poor quality are eliminated via thresholding.   \\ \cline{2-3}
   & DCS-AMP \cite{schniter2} & The dynamic change of the support is modeled by the markov process and efficient message passing algorithm based on AMP is applied.  \\ \hline
   \end{tabular}
\end{center}
\end{table*}

Depending on how the support information is shared among multiple measurement vectors, MMV scenario can be divided into three distinct cases:
\begin{enumerate}
\item The supports of multiple measurement vectors are the same 
    but the values for nonzero positions are distinct. The system matrix for each measurement vector is identical. 
\item The supports of multiple measurement vectors are the same 
       but the values for nonzero positions are distinct. The system matrix for all measurement vectors are also distinct.
\item The support of multiple measurement vectors are slightly different.
\end{enumerate}
%
%
\begin{figure}[t]
\begin{center}
	\includegraphics[width=110mm,height=95mm]{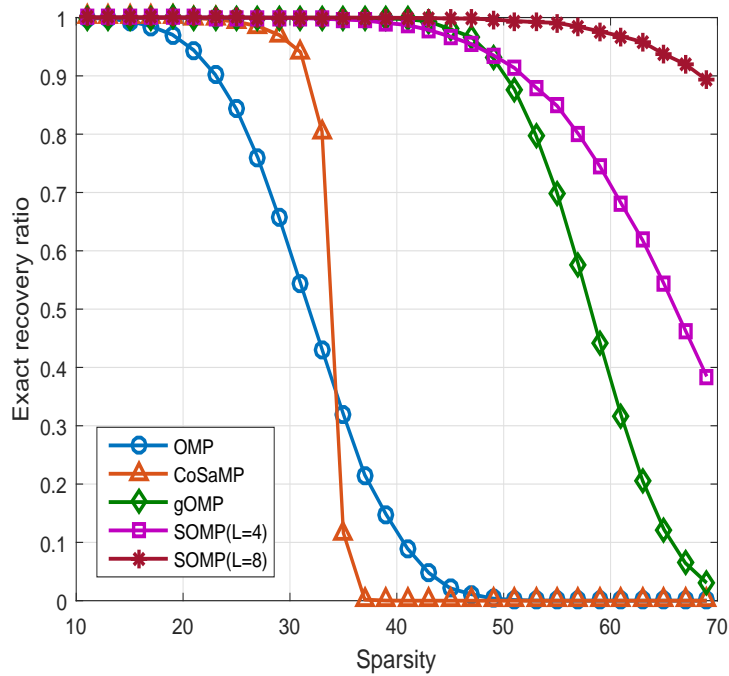}
\caption{Performance comparison between OMP and SOMP ($M=32, N=64, k=8$).}
\label{fig:mmv}
\end{center}
\end{figure}
%
%
\begin{figure} [!]
 \centering
    \subfigure[Scenario 1: supports are the same and the system matrices are also the same.]
  {\epsfig{figure=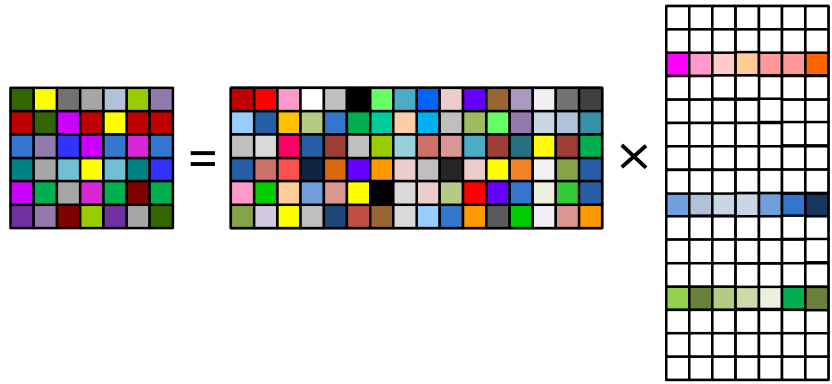, width=110mm}}
   \subfigure[Scenario 2: supports are the same but the system matrices are different.]
  {\epsfig{figure=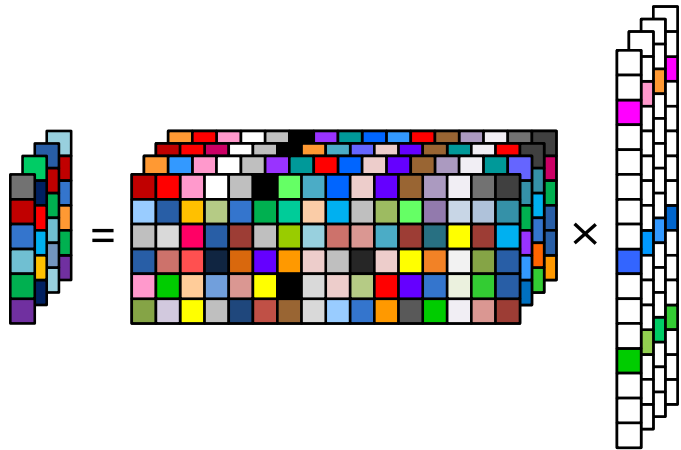, width=90mm}}
   \subfigure[Scenario 3: supports vary slightly in time.]
  {\epsfig{figure=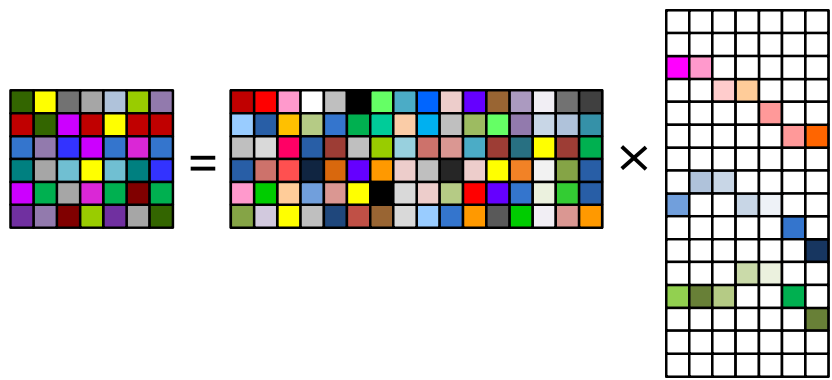, width=110mm}}
  \caption { Illustration of three MMV scenarios.
   } \label{fig:sc}
\end{figure}

%
%
%
The first scenario is the most popular scenario of the MMV problem (see Fig.~\ref{fig:sc}(a)).
In this scenario, we express the measurement vectors as
\begin{eqnarray}
    \mathbf{Y} = \mathbf{H S} + \mathbf{N}
\end{eqnarray}
where
$\mathbf{Y} = [\mathbf{y}_{1} ~ \cdots ~ \mathbf{y}_{N}]$,
$\mathbf{S} = [\mathbf{s}_{1} ~ \cdots ~ \mathbf{s}_{N}]$, and
$\mathbf{N} = [\mathbf{n}_{1} ~ \cdots ~ \mathbf{n}_{N}]$.
The recovery algorithm finds the column indices of $\mathbf{H}$ corresponding to the nonzero row vectors of $\mathbf{S}$ using the measurement matrix $\mathbf{Y}$.
It has been shown from theoretic analysis that the performance of the MMV-based algorithm improves exponentially with the number of measurements \cite{mmv_proof,mmv_proof2}.
In fact, MMV-based sparse recovery algorithms perform much better than the SMV-based recovery algorithms \cite{rahut,rao_mmv,tropp2006algorithms,wipf2007empirical}.
Various recovery algorithms have been proposed for MMV scenario.
In \cite{tropp_part2}, the convex relaxation method based on the mixed norm has been proposed. In \cite{tropp2006algorithms}, the greedy algorithm called simultaneous OMP (SOMP) is proposed.
Statistical sparse estimation techniques for MMV scenario include MSBL \cite{wipf2007empirical}, AR-SBL \cite{rao_ar}, and TSBL \cite{rao_temporal}.
In \cite{kiryung}, an approach to identify the direction of arrival (DoA) in array signal processing using the MMV model has been investigated.
%
Further improvement in the recovery performance can be achieved by exploiting the statistical correlations between the signal amplitudes  \cite{rao_temporal,rao_ar,kalmancs}.

The second scenario is a generalized version of the first scenario in the sense that system matrices are different for all measurement vectors. Extensions of OMP algorithm \cite{duarte2005distributed}, iteratively reweighted algorithm \cite{ding2015joint}, sparse Bayesian learning algorithm \cite{ji2009multitask}, and Kalman-based sparse  recovery algorithm \cite{choi_shim} has been proposed. In \cite{schniter}, it has been shown that the graph-based inference method is effective in this scenario.
In the third scenario, the recovery algorithms need to keep track of temporal variations of the support since the sparsity pattern changes over time. However, since the change is usually very small, the sparsity pattern can be tracked by estimating the difference between two support sets for consecutive measurement vectors \cite{modifiedcs,lau}. The algorithm employing approximate message passing (AMP) is used for this scenario in \cite{schniter2}. In Table \ref{tb:mmv}, we summarize the recovery algorithms based on the MMV model.

In summary, main point in this subsection is that having multiple {\it correlated} measurements is very helpful
in improving the performance.
Depending on system setup and model, one may achieve sub-linear, linear, or super-linear performance gain proportional to the number of measurements.  Often we spend lots of time on the algorithm selection yet have hard time satisfying required performance. Remembering that good recovery performance is what we want at the end of the day, easy and practical way to achieve the goal is to use multiple measurements (see Fig. \ref{fig:mmv}). In fact, in many static or slowly-varying environments, measurements of the previous sample time are largely correlated to the current ones and one can simply use them for better performance.
To come up with an efficient CS technique for the time-varying scenario is important research problem.

\subsection{Can We Do Better If Integer Constraint Is Given?}
\label{sec:sparsedet}

\begin{figure}[t]
\begin{center}
	\includegraphics[width=110mm,height=95mm]{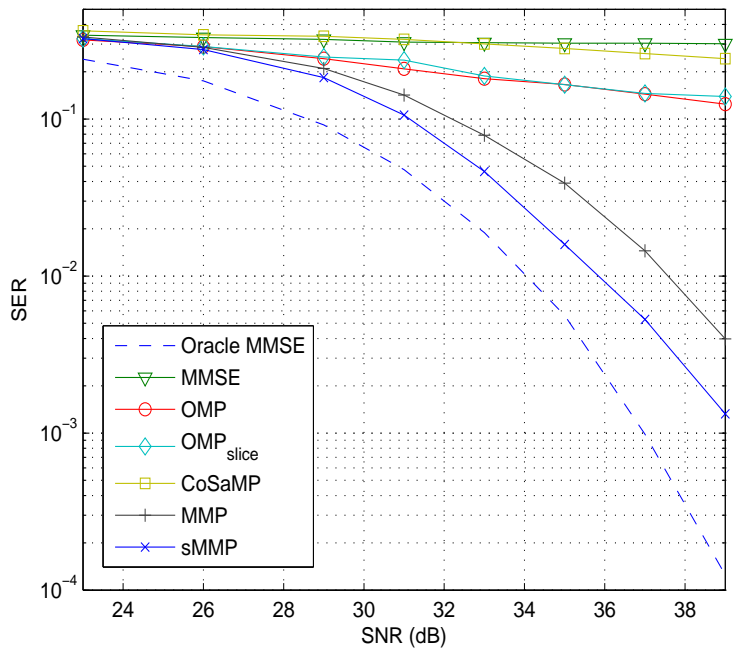}
\caption{SER performance when the nonzero positions of input vector is chosen from 16-quadrature amplitude modulation (QAM)  ($m=12, n=24, k=5$). sMMP refers to the MMP algorithm equipped with slicing operation.}
\label{fig:ser}
\end{center}
\end{figure}

When the nonzero elements of the target vector $\mathbf{s}$ are from the set of finite alphabets, one can exploit this information for the better reconstruction of the sparse vector.
%
In order to incorporate the integer constraint into the sparse input vector, one can either modify the conventional detection algorithm or incorporate an integer constraint into the sparse recovery algorithm.

First, when the detection approach is used, one can simply add the zero into the constellation set $\Theta$. For example, if nonzero elements of $\mathbf{s}$ are chosen from binary phase shift keying (BPSK) (i.e., $s_i \in \Theta = \{ -1, 1 \}$), then the modified constellation set becomes $\Theta' = \{-1, 0, 1 \}$. Sparsity constraint $\| \mathbf{s} \|_0 = k$ can also be used to limit the search space of the detection algorithm. For example, if the maximum likelihood (ML) tree search (e.g., sphere decoding) is performed, one can stop the search if the cardinality of the symbol vector equals the sparsity.

On the other hand, when the sparse recovery algorithm is used, one should incorporate the quantization step to map real (complex) value into the symbol.
As a simple example, if the OMP algorithm is used, $\Theta = \{-1, 1 \}$, and $\hat{\mathbf{s}}_i = [0.7 ~ -0.6]^T$, then the quantized output becomes $Q_{\Theta} (\hat{\mathbf{s}}_i) = [1 ~ -1]^T$.
Note, however, that just using the quantized output might not be effective, in particular for the sequential greedy algorithms due to the error propagation.
For example, if an index is chosen incorrectly in one iteration, then the estimate will also be incorrect and thus the quantized output will bring additional quantization error, deteriorating the subsequent detection process.
In this case, parallel tree search strategy can be a good option to alleviate the error propagation phenomenon. For example, a tree search algorithm performs the parallel search to find multiple promising candidates (e.g., MMP \cite{kwon}). Among the multiple candidates, the best one minimizing the residual magnitude is chosen in the last minute ($\hat{\mathbf{s}} = \arg \min_{\| \mathbf{s} \|_0 = k , \mathbf{s} \in \Theta^k} \| \mathbf{y} - \mathbf{H s} \|$).
The main benefit of tree search method, from the perspective of incorporating the integer slicer, is that it deteriorates the quality of incorrect candidate yet enhances the quality of correct one. This is because the quality of incorrect candidates gets worse due to the additional quantization noise caused by the slicing while no such phenomenon happens to be the correct one (note that the quantization error is zero for the correct symbol).
As a result, as shown in Fig. \ref{fig:ser}, the recovery algorithms accounting for the integer constraint of the symbol outperform those without considering this property.

Moral of the story in this subsection is that if additional hint exists, then one can actively use it for better performance. Integer constraint discussed above would be a useful hint in the detection of the sparse vector. However, these hints might be a double-edged sword; if it is not used properly, it will do harm. Thus, one should use with caution in order not to worsen the performance. Since the parallel detection strategy we mentioned has cost issue, one can come up with solution to trade-off the cost and performance. Soft quantization can be one option. Iterative detection using a prior information from the decoder would also be a viable option to be investigated.

\subsection{Should We Know Sparsity a Priori?}
\label{sec:shouldwe}
%
Some algorithm requires the sparsity of an input signal while others do not need this. For example, sparsity information is unnecessary for the $\ell_1$-norm minimization approaches but many greedy algorithms need this since the sparsity is used as stoping criteria of the algorithm. When needed, one should estimate the sparsity using various heuristics. Before we discuss on this, we need to consider what will happen if the sparsity information is incorrect.
In a nutshell, setting the number of iterations not being equivalent to the sparsity leads to either early or late termination of the greedy algorithm. In the former case (i.e., {\it underfitting} scenario), the desired signal will not be fully recovered  while some of the noise vector is treated as a desired signal for the latter case (i.e., {\it overfitting} scenario). Both cases are undesirable, but performance loss is typically more severe for underfitting due to the loss of the signal. Thus, it might be safe to use slightly higher sparsity, especially when the noise effect is not that severe. For example, if the sparsity estimate is $\hat{k}$, one can take $1.2\hat{k}$ as an iteration number of OMP.

As a sparsity estimation strategy, the residual-based stopping criterion and cross validation \cite{CV} are well-known.
The residual based stopping criterion is widely used to identify the sparsity level (or iteration number) of the greedy algorithm.
Basically, this scheme terminates the algorithm when the residual power is smaller than the pre-specified threshold $\epsilon$ (i.e., $\| \mathbf{r}^i \|_2 < \epsilon$). The iteration number at the termination point is set to the sparsity level.
However, since the residual magnitude decreases monotonically and the rate of decay depends on the system parameters,
it might not be easy to identify the optimal terminating point.
%

Cross validation (CV) is another technique to identify the model order (sparsity level $k$ in this case) \cite{CV}.
In this scheme, the measurement vector $\mathbf{y}$ are divided into two parts: a training vector $\mathbf{y}^{(t)}$ and a validation vector $\mathbf{y}^{(v)}$. In the first step, we generate a sequence of possible estimates $\hat{\mathbf{s}}_1 , \cdots , \hat{\mathbf{s}}_n$ using a training vector $\mathbf{y}^{(t)}$, where $\hat{\mathbf{s}}_i$ denotes the estimate of  $\mathbf{s}$ obtained under the assumption that the sparsity equals $i$.
In the second step, the sparsity is predicted using the validation vector $\mathbf{y}^{(v)}$.
Specifically, for each estimate $\hat{\mathbf{s}}_i$, the validation error $\epsilon_i = \| \mathbf{y}^{(v)} - \mathbf{H}^{(v)} \hat{\mathbf{s}}_i \|_2$ is computed. 
Initially, when the count $i$ increases, the quality of the estimate improves since more signal elements are added and thus the validation error $\epsilon_i$ decreases. However, when the count $i$ exceeds the sparsity, that is, when we choose more columns than needed, we observe no more decrease in $\epsilon_i$ and just noise will be added to the validation error.
Since the validation error has a convex shape, the number generating the minimum validation error is returned as the sparsity estimate ($\hat{k} = \arg \min_i \epsilon_i$).

When one uses the CS algorithm in which the sparsity is used as an input parameter, one needs to be aware of the sparsity information. More importantly, in order to decide whether the CS technique is useful or not, one should know if the target signal vector is sparse. There are a variety of ways to check the sparsity. In order to pursue an accurate evaluation of the sparsity, it is desired to use a bit expensive option like cross validation. In many stationary situations, fortunately, one can do it once and use it for a while.

\section{Conclusion and Future Direction}	
\label{sec:conclusion}
In this article, we have provided an overview of the CS technique for wireless communication systems.
%
We discussed basics of CS techniques, three subproblems of CS related to wireless communications, and various wireless applications which CS techniques can be applied to. We also discussed several main issues that one should be aware of and subtle points that one should pay attention to.
%
There are a broad class of wireless applications to which the CS technique would be beneficial and much work remains to be pursued. We list here some of future research directions.
%
\begin{itemize}
\item It would be interesting to design a flexible CS technique and sparse recovery algorithm that can adapt to various and diverse wireless environments and input conditions.
    It is very difficult and cumbersome task for wireless practitioners to do calibration and tuning on the various channel conditions and parameters (e.g., measurement size and input sparsity). Thus, it would be very useful to come up with solution that adapts to diverse wireless environments.
\item Most of the system matrices discussed in the CS theory are Gaussian and random matrices. In many real situations, however, we cannot use the ``purely" random matrices and need to use the deterministic matrices. Unfortunately, there is no well-known guideline or design principle for the deterministic sensing matrix. Thus, it would be good to have practical design guideline for the system using the deterministic matrices. Related to this, it would be useful to come up with a simple and effective dictionary learning technique suited for wireless communication systems.
\item Most of performance metrics in wireless communication systems are statistical (e.g., BER, BLER) and it would be useful to have more realistic analysis tool based on statistical approach. As mentioned, analytic tools to date (e.g., RIP or mutual coherence) are too stringent for wireless communication purpose and it would be good to have more flexible tool that bridges the theory and practice.
\item It might be worth investigating if the machine learning technique would be helpful in classifying whether the CS technique can be applied or not in a given scenario. Note that it is very difficult to judge if the CS technique is effective in the the complicated wireless communication scenarios. Since we already have large collection of data at the base-station, state-of-the-art machine learning approach like deep neural network (DNN) might be useful to extract a simple answer to this problem.
\item What if the system matrix is sparse, not the input vector. In this work, we have primarily discussed the scenario where the desired input vector is sparse. But there are some scenarios where the input-output connection is sparse, not the input vector (e.g., massive multiple access scheme). Solution of this problem will be useful in various machine-type applications. Related to this, applying the matrix completion technique to wireless communication problem would also be interesting direction to be pursued.
\item As the dimension of wireless communication systems increases,  design time and computational complexity  increase significantly. We often want an algorithm whose complexity scales linearly with the problem size.
    Without doubt, for the successful development of CS-based wireless communication systems, low-complexity and fast implementation is of great importance. Development of implementation-friendly algorithm and architecture would expedite the commercialization of CS-based wireless systems.

\end{itemize}

As a final remark, we hope that this article will serve as a useful guide for wireless communication researchers, in particular for those who want to use the CS technique to grasp the gist of this interesting paradigm. Since our treatment in this paper is rather casual and non-analytical, one should dig into details with further study. However, if the readers learn in mind that essential knowledge in a pile of information is always {\it sparse}, the journey will not be that tough.

\section*{Acknowledgment}
The authors would like to thank J. Kim and H. Seo for their help in simulations and  the anonymous reviewers for their invaluable comments and suggestions, which helped  to improve the quality of the paper.


\vspace{1cm}

\section{Proof of Theorem 1}
We first consider {\it only if} case. We assume that there are more than one vector, say $\mathbf{s}_1$ and $\mathbf{s}_2$, satisfying $\mathbf{y} = \mathbf{Hs}$ and both have at most $k$ nonzero elements. Then, by letting $\mathbf{u} = \mathbf{s}_1 - \mathbf{s}_2$, we have $\mathbf{Hu} = \mathbf{0}$ where $u$ has at most $2k$ nonzero elements.
Since $spark(\mathbf{H}) > 2k$ from the hypothesis, any $2k$ columns in $\mathbf{H}$ are linearly independent, implying that $\mathbf{u} = \mathbf{0}$ and hence $\mathbf{s}_1 = \mathbf{s}_2$.
We next consider the {\it if} case. We assume that, for given $\mathbf{y}$, there exists at most one $k$-sparse signal $\mathbf{s}$ satisfying $\mathbf{y} = \mathbf{H} \mathbf{s}$ and $spark( \mathbf{H} ) \leq 2k$.
Under this hypothesis, there exist a set of $2k$ columns that are linearly dependent, implying that there exists $2k$-sparse vector $\mathbf{u}$ in $N( \mathbf{H} )$ (i.e., $\mathbf{H u} = \mathbf{0}$). Since $\mathbf{u}$ is $2k$-sparse, we can express it into the difference of two $k$-sparse vectors $s_1$ and $s_2$ ($\mathbf{u} = \mathbf{s}_1 - \mathbf{s}_2$).
Since $\mathbf{H u} = \mathbf{0}$, $\mathbf{H}(\mathbf{s}_1 - \mathbf{s}_2) = \mathbf{0}$ and hence $\mathbf{H} \mathbf{s}_1 = \mathbf{H} \mathbf{s}_2$, which contradicts the hypothesis that there is at most one $k$-sparse solution satisfying $\mathbf{y} = \mathbf{H} \mathbf{s}$. Thus, we should have $spark ( \mathbf{H} ) > 2k$.

\end{document}